\documentclass[11pt]{article}

\usepackage[margin=1in]{geometry}
\usepackage{amsmath,amssymb,amsthm,mathtools}
\usepackage{booktabs,array}
\usepackage{enumitem}
\usepackage{xcolor}
\usepackage{tikz}
\usepackage{algorithm}
\usepackage{algpseudocode}
\usepackage[colorlinks=true,linkcolor=blue!45!black,citecolor=blue!45!black,urlcolor=blue!45!black]{hyperref}

\setlist{topsep=3pt,itemsep=2pt,parsep=0pt}
\newtheorem{theorem}{Theorem}[section]
\newtheorem{lemma}[theorem]{Lemma}
\newtheorem{proposition}[theorem]{Proposition}
\newtheorem{corollary}[theorem]{Corollary}
\theoremstyle{definition}
\newtheorem{definition}[theorem]{Definition}
\theoremstyle{remark}
\newtheorem{remark}[theorem]{Remark}

\newcommand{\bcost}{\mathcal{B}}
\newcommand{\cur}[2]{C_{#1}(#2)}
\newcommand{\curlo}[2]{a_{#2}^{(#1)}}
\newcommand{\curhi}[2]{b_{#2}^{(#1)}}
\newcommand{\sh}[2]{S_{#1}(#2)}
\newcommand{\hull}[3]{H_{#3}[#1,#2]}
\newcommand{\actv}[2]{A[#1,#2]}
\newcommand{\out}{\operatorname{out}}
\newcommand{\sw}{\operatorname{sw}}
\newcommand{\wid}{\operatorname{wid}}
\newcommand{\Vsp}{\mathcal{V}}
\newcommand{\Oh}[1]{O\!\left(#1\right)}

\title{Fast Stencil Computations on a Single\\ Arbitrarily Moving Interval}
\author{Aaron Gregory\thanks{\texttt{aaron.f.gregory@stonybrook.edu}} \\
  Stony Brook University}
\date{}

\begin{document}
\maketitle

\begin{abstract}
A stencil computation repeatedly updates every cell of a grid from its
neighbours' values at the previous timestep. Simulating $T$ steps on $N$ cells
directly costs $\Theta(NT)$, and a line of work beginning with Ahmad et
al.~\cite{ahmad2021fast} reduces this by composing many timesteps into one
linear operator and applying it with a Fast Fourier Transform.

That technique needs to know which cells will still obey the same operator when
the composed step ends. In a free-boundary problem they do not: the region
governed by a given rule is determined by the solution and moves as it evolves.

We study the case of one spatial dimension, a three-point stencil with
time-varying coefficients, and a computed region that is a single interval whose
two endpoints move by arbitrary amounts at every step, revealed online. Let
$\bcost$ be the horizon plus the total variation of the boundary trajectory. We
give a schedule whose total work is $\Oh{(\bcost+N)\log T\log(N+\bcost)}$ and
whose span is $\Oh{T\log T\log(N+\bcost)}$, and we prove that the values it
computes are exact.

The best existing bound for a region that moves requires its boundary to travel at
most one cell per timestep. We drop that requirement and lose nothing by it: a
boundary obeying the requirement has $\bcost \le 3T$, so our bound stays
near-linear on every trajectory the earlier result covers. On the trajectories it
does not cover, $\bcost$ grows only by the distance the boundary actually
travels --- one jump of width $N$ costs $T + 2N$.

The reason total variation suffices is that everything the two endpoints touch over a time
window of any length lies in two intervals, one per endpoint. We also show this
cannot be relaxed: with $p$ regions the bound degrades by a factor $p$, and at
$p=\sqrt T$ there is an instance on which the work is $\Theta(T^{3/2})$ while
$\bcost+N=\Theta(T)$.

All results are machine-checked in Lean~4, apart from the classical convolution
bound of Theorem~\ref{thm:fft}, which is imported as an interface.
\end{abstract}

\section{Introduction}
\label{sec:intro}

A \emph{stencil computation} evaluates a recurrence on a spatial grid over a time
horizon: the value at a cell at time $t$ is a fixed function of the values at
that cell and its neighbours at time $t-1$. Write $N$ for the number of cells and
$T$ for the number of timesteps. Evaluating the recurrence as written --- every
cell at every step --- costs $\Theta(NT)$, and for many scientific and financial
kernels that is the dominant cost.

When the update is linear and does not vary across space, $k$ consecutive steps
compose into a single linear operator whose support has radius $k$, and applying
that operator to the whole grid is one convolution, which a Fast Fourier
Transform evaluates in $\Oh{N\log N}$ work rather than $\Theta(Nk)$. We call one
such composed application a \emph{superstep}. Trading $k$ timesteps for one
superstep is the idea behind the FFT-based stencil algorithms
of~\cite{ahmad2021fast,ahmad2022fourst,ahmad2023fast}.

\paragraph{The difficulty.}
A superstep can only be taken over a set of cells that obey the same operator for
all $k$ of the steps it spans. In a free-boundary problem this set is not known
in advance: the region governed by a given rule is determined by the solution
itself and moves as the computation proceeds. An algorithm that wants long
supersteps must therefore decide, before evaluating
them, which cells will still be governed by the same operator when the superstep
ends --- and must do so without inspecting the whole grid at every step, since
that would already cost $\Theta(NT)$.

\paragraph{The setting.}
We work in one spatial dimension. The computed region at time $t$ is a single
half-open interval $I(t) = [a_t, b_t) \subset \mathbb{Z}$, nonempty. Its two
endpoints are otherwise unconstrained: each may move by an arbitrary amount in
either direction at each step, and the trajectory is revealed online, one step at
a time. The stencil support at each step is contained in $\{-1,0,1\}$, with
coefficients that may vary with time. Cells whose dependencies reach outside the
previous region are supplied by a boundary oracle
(Section~\ref{sec:model}). The algorithm must produce every value on $I(t)$ for
every $t \le T$.

\paragraph{The parameter.}
Write $\Delta a_t = |a_t - a_{t-1}|$ and $\Delta b_t = |b_t - b_{t-1}|$, and set
\begin{equation}
  \label{eq:B}
  \bcost \;=\; T \;+\; \sum_{t=1}^{T} \bigl( \Delta a_t + \Delta b_t \bigr),
\end{equation}
the horizon plus the total variation of the boundary trajectory. The $T$ term is
forced: with coefficients that change every step, the schedule must be read, so
$\Omega(T)$ work is unavoidable even for a stationary region.

\subsection{Prior work and what it assumes}
\label{sec:prior}

The FFT-based line optimizes the cost of a homogeneous superstep.
\cite{ahmad2021fast} applies it to periodic linear stencils and
\cite{ahmad2022fourst} generates implementations;~\cite{ahmad2023fast} handles
supplied boundary conditions by a top-down recursive decomposition of the
spacetime domain. Each of these fixes the spatial partition in advance, which is
what a moving region violates. Formulations in which the update is nonlinear, so
that the set obeying a given rule is determined by the solution, are treated
in~\cite{ahmad2024fast}. \cite{companion2025} allows the operator to vary
in time and to differ between regions, with coefficients homogeneous inside each
region, and allows the regions themselves to vary in time; it introduces the
binary time-product tree we use in Section~\ref{sec:algorithm} to obtain composed
operators. We take the
binary-forking cost model and its work-efficient FFT
from~\cite{ahmad2021binary}.

Of these, only~\cite{companion2025} admits a region that moves at all, and it is
the result we compare against. Its Theorem~4.8 states that when $T \le N/B_0$,
\textsc{AperiodicSolve} computes $a_T$ in
\begin{equation}
  \label{eq:prior}
  \Oh{N\log N\log T + B_\Sigma}\ \text{work and}\
  \Oh{T\log N\log\log N}\ \text{span},
\end{equation}
where $B_t$ counts the boundary cells of every region together with the cells
governed by the boundary condition, $B_\Sigma = \sum_{t \le T} B_t$, and
$B_0$ is that count at $t=0$. It carries a \emph{regularity condition} on how a
region may move: the boundary at time $t+1$ must lie inside the region of
influence of the boundary at time $t$, so an endpoint travels at most one stencil
radius per step.

Specialized to the present setting --- one dimension, one interval, a three-point
stencil --- the regularity condition says each endpoint moves by at most one cell
per step, $B_\Sigma = \Theta(T)$, and~\eqref{eq:prior} reads
$\Oh{N\log N\log T}$ work for $T = \Oh{N}$.

\textbf{We remove the regularity condition, and the restriction on $T$ with it.}
An endpoint may move any distance at any step, and the horizon is unbounded. What
replaces the condition is the parameter: a trajectory that obeys it has
$\bcost \le 3T$, so~\eqref{eq:prior} and our bound describe the same regime there,
while a trajectory that breaks it even once leaves~\eqref{eq:prior} with nothing
to say and is charged by us at $\bcost$
(Proposition~\ref{prop:regularity}).

\subsection{Results}
\label{sec:results}

\begin{table}[t]
\centering
\small
\begin{tabular}{@{}lccl@{}}
\toprule
& Work & Span & Boundary may move \\
\midrule
Direct simulation & $\Theta(NT)$ & $\Theta(T)$ & freely \\
Bentley et al.~\cite{companion2025} & $\Oh{N\log N\log T}$ & $\Oh{T\log N\log\log N}$ & one cell per step \\
This paper & $\Oh{(\bcost{+}N)\log T\log(N{+}\bcost)}$ & $\Oh{T\log T\log(N{+}\bcost)}$ & freely \\
\bottomrule
\end{tabular}
\caption{Bounds for one dimension, one interval and a three-point stencil; row
two is Theorem~4.8 of~\cite{companion2025} specialized to that setting, and holds
only for $T = \Oh{N}$. Where row two applies, row three matches it to within a
logarithmic factor, because a boundary moving one cell per step has
$\bcost \le 3T$; where it does not, row three still applies and charges the
distance the boundary travels (Proposition~\ref{prop:regularity}).}
\label{tab:compare}
\end{table}

Our results are the following.

\begin{enumerate}[label=(\roman*)]
\item \textbf{A closed form for the cursor.} The set of cells at time $f$ that
  can be computed from time $f-w$ without consulting anything outside the region
  is an interval, and both of its endpoints are an explicit maximum and minimum
  over the intervening slices (Definition~\ref{def:cursor}). For arbitrary
  regions this set is an intersection of erosions, and a radius bound for it
  must be obtained by induction; here it is read off a formula.
\item \textbf{Exactness} (Theorem~\ref{thm:exact}), which uses neither linearity
  of the update, nor any bound on the boundary's speed, nor any relation between
  the stencil and the boundary's motion.
\item \textbf{A write count of $12(L+1)\bcost$, with no hypothesis on the
  trajectory} (Theorem~\ref{thm:writes}), where $L = \lfloor \log_2 T\rfloor$.
\item \textbf{Work and span} (Theorems~\ref{thm:work} and~\ref{thm:span}), as in
  Table~\ref{tab:compare}.
\item \textbf{A necessity result} (Proposition~\ref{prop:pieces}): with $p$
  regions the write count degrades by a factor $p$, and at $p = \sqrt{T}$ the
  bound fails by $\Theta(\sqrt T)$.
\item \textbf{An identity} (Theorem~\ref{thm:tv}) showing that the total
  variation is within a factor two of the total \emph{outward} movement plus the
  initial width, which is what justifies charging contraction as well as
  expansion.
\item \textbf{The cost of the generality} (Proposition~\ref{prop:regularity}):
  under the regularity condition $\bcost \le 3T$, and a single unconstrained step
  is enough to leave that condition behind.
\end{enumerate}

\paragraph{Scope.}
Three limitations, stated once. The stencil is three-point and the dimension is
one; Section~\ref{sec:discussion} explains why the argument does not survive
$d \ge 2$. The span is linear in $T$, because the region at time $t$ is not
determined until time $t-1$ has been computed, so the timesteps serialize; when
the trajectory is instead supplied in advance and the update is linear,
Corollary~\ref{cor:offline} gives polylogarithmic span. And what we prove about
correctness is Theorem~\ref{thm:exact} together with the covering identity of
Proposition~\ref{prop:cover}: every cell is supplied by the oracle, filled by
exactly one shell, or deferred, and each shell's values are determined by the
slice it is solved from. Algorithms~\ref{alg:solve} and~\ref{alg:step} are
presented but not formalized: we prove neither that the recursive and per-time
views coincide, nor an end-to-end statement that running the recursion produces
every requested value.

\paragraph{Roadmap.}
Section~\ref{sec:related} places the result among neighbouring lines of work.
Section~\ref{sec:model} fixes the model, the oracle, and the cost of a superstep.
Section~\ref{sec:algorithm} gives the cursor, the schedule, and a worked example;
a reader who wants only the algorithm can stop there.
Section~\ref{sec:correctness} proves exactness. Section~\ref{sec:complexity} is
the accounting, and runs in one chain: a shell is charged to a backward window,
the activity in a window is two intervals, and the windows at a given level tile.
Section~\ref{sec:necessity} shows the single-interval hypothesis cannot be
dropped, and Section~\ref{sec:discussion} discusses span and dimension.

\begin{figure}[t]
\centering
\begin{tikzpicture}[scale=0.42]
\begin{scope}
\foreach \t/\A/\Bx in {0/6/11, 1/6/12, 2/5/12, 3/5/13, 4/1/13, 5/2/13, 6/3/14, 7/3/14, 8/4/15}
  \fill[blue!11] (\A,-\t) rectangle (\Bx,-\t-1);
\draw[very thick, blue!65!black]
  (6,0)--(6,-2)--(5,-2)--(5,-4)--(1,-4)--(1,-5)--(2,-5)--(2,-6)--(3,-6)--(3,-8)--(4,-8)--(4,-9);
\draw[very thick, red!65!black]
  (11,0)--(11,-1)--(12,-1)--(12,-3)--(13,-3)--(13,-6)--(14,-6)--(14,-8)--(15,-8)--(15,-9);
\draw[draw=blue!70!black, dashed, thick] (0,0) rectangle (7,-9);
\draw[draw=red!70!black, dashed, thick] (10,0) rectangle (16,-9);
\node[blue!60!black, font=\scriptsize] at (3.5,0.7) {left strip};
\node[red!60!black, font=\scriptsize] at (13,0.7) {right strip};
\draw[<->] (-0.6,0) -- (-0.6,-9) node[midway,left,font=\scriptsize,rotate=90,anchor=south] {window};
\node[font=\scriptsize] at (8.5,-11.2) {(a) whatever the endpoints do, everything};
\node[font=\scriptsize] at (8.5,-12.2) {they touch lies in two strips};
\node[font=\scriptsize, anchor=west] at (0,-9.6) {$\underline a$};
\node[font=\scriptsize, anchor=east] at (7,-9.6) {$\overline a$};
\node[font=\scriptsize, anchor=west] at (10,-9.6) {$\underline b$};
\node[font=\scriptsize, anchor=east] at (16,-9.6) {$\overline b$};
\end{scope}
\begin{scope}[xshift=21cm, yshift=-3.7cm]
\fill[blue!11] (0,0) rectangle (16,-1.6);
\fill[orange!45] (1,0) rectangle (3,-1.6);   \fill[orange!45] (13,0) rectangle (15,-1.6);
\fill[green!35]  (3,0) rectangle (5,-1.6);   \fill[green!35]  (11,0) rectangle (13,-1.6);
\fill[purple!30] (5,0) rectangle (7,-1.6);   \fill[purple!30] (9,0) rectangle (11,-1.6);
\fill[gray!30]   (7,0) rectangle (9,-1.6);
\draw (0,0) rectangle (16,-1.6);
\foreach \x in {1,3,5,7,9,11,13,15} \draw[thin,gray] (\x,0)--(\x,-1.6);
\node[font=\scriptsize] at (2,-2.3) {$S_0$};
\node[font=\scriptsize] at (4,-2.3) {$S_1$};
\node[font=\scriptsize] at (6,-2.3) {$S_2$};
\node[font=\scriptsize] at (8,-2.3) {$\cur{2^3}{t}$};
\node[font=\scriptsize] at (14,-2.3) {$S_0$};
\node[font=\scriptsize] at (12,-2.3) {$S_1$};
\node[font=\scriptsize] at (10,-2.3) {$S_2$};
\node[font=\scriptsize, anchor=west] at (-0.2,0.6) {$I(t)$, at a time $t$ with $\nu(t)=3$};
\node[font=\scriptsize] at (8,-7.5) {(b) the shells are differences of nested};
\node[font=\scriptsize] at (8,-8.5) {cursors, hence two arms each};
\end{scope}
\end{tikzpicture}
\caption{(a) The computed region $I(t)=[a_t,b_t)$ over a window of nine steps,
including one large jump of the left endpoint. Every cell the endpoints sweep
lies in one of two strips, whose widths are the endpoints' total travel over the
window; this holds for a window of any length, which is
Lemma~\ref{lem:chaining}. (b) At a trigger time $t$, the nested cursors
$\cur{1}{t}\supset\cur{2}{t}\supset\cur{4}{t}\supset\cur{8}{t}$ cut the region into
shells, each a pair of arms, plus a deep cursor handed to the next level.}
\label{fig:main}
\end{figure}

\section{Related work}
\label{sec:related}

Section~\ref{sec:prior} covered the FFT-based line this paper extends and the
bound it compares against. Three neighbouring lines are worth separating from it.

\paragraph{Direct algorithms.}
The standard stencil algorithms evaluate the recurrence as written and perform
$\Theta(NT)$ work: nested loops, tiled loops --- time skewing~\cite{wonnacott2002}
and diamond tiling~\cite{bondhugula2017} among them --- and cache-oblivious
recursive decomposition of the spacetime domain~\cite{frigo2005cache}, realized
in the Pochoir compiler~\cite{tang2011pochoir}. These differ in the order they
visit the $NT$ spacetime cells, which is what determines their locality and
parallelism, but not in how many they visit. Their generality is the reason: they
place no condition on the dimension, the support, the linearity of the update, or
the shape of the region, and so they remain the fallback whenever the hypotheses of
Section~\ref{sec:model} fail. Everything asymptotically faster buys the speedup with a structural
assumption, and the assumption is the interesting part.

\paragraph{Tracking a moving interface.}
A separate literature is devoted to representing a boundary that moves. Level-set
methods~\cite{osher1988fronts} carry the interface as the zero set of an
auxiliary field, and narrow-band variants~\cite{adalsteinsson1995fast} restrict
the update to a neighbourhood of that set so the per-step cost scales with the
interface rather than the grid. The concern there is the geometry: how to
represent a front that merges, splits or develops curvature, and how to advance it
stably. That is not the question here. We take the boundary as given --- an oracle
reports it, one step at a time --- and ask how cheaply the \emph{interior} can be
advanced once it is known. The two are complementary, and neither subsumes the
other: a narrow band still advances the interior one timestep at a time, which is
exactly the $\Theta(T)$ factor a superstep removes.

\paragraph{Cost model.}
Work and span are measured in the binary-forking
model~\cite{blelloch2019optimal,ahmad2021binary}, in which spawning $n$ threads
costs $\Theta(\log n)$ span. We use it only through Theorem~\ref{thm:fft};
nothing in Section~\ref{sec:complexity} depends on the choice.

\section{Model and cost}
\label{sec:model}

\subsection{Regions and trajectories}

\begin{definition}[Region and trajectory]
\label{def:region}
A \emph{region} is a pair of integers $a < b$, written $I = [a,b)$, with cells
$\{x \in \mathbb{Z} : a \le x < b\}$ and width $\wid(I) = b-a$. A
\emph{trajectory} assigns a region $I(t) = [a_t, b_t)$ to each $t \in
\{0,\dots,T\}$. We write $N = \wid(I(0))$.
\end{definition}

\begin{definition}[Sweep, changed cells, frontier]
\label{def:sweep}
The \emph{sweep} of an endpoint from $u$ to $v$ is $\sw(u,v) = [\min(u,v),
\max(u,v))$, which has $|u-v|$ cells. The \emph{changed cells} at step $t$ are
$D_t = \sw(a_{t-1},a_t) \cup \sw(b_{t-1},b_t)$, and the \emph{frontier} of a
region is $\Gamma(I) = \{a-1,\,a,\,b-1,\,b\}$.
\end{definition}

Sweeps are half-open so that consecutive sweeps of the same endpoint meet
exactly: $[u,v) \cup [v,w) = [u,w)$. With inclusive endpoints they would leave a
one-cell gap at each $v$, and Lemma~\ref{lem:chaining} would be false.

\begin{center}
\small
\begin{tabular}{@{}ll@{\qquad}ll@{}}
\toprule
$I(t) = [a_t,b_t)$ & the region at time $t$ & $\cur{w}{t}$ & cursor: solvable from $t-w$ \\
$N = \wid(I(0))$ & initial width & $\curlo{w}{t},\ \curhi{w}{t}$ & its endpoints \\
$D_t$ & cells that changed at $t$ & $\sh{i}{t}$ & shell filled at level $i$ \\
$\Gamma(I)$ & frontier of a region & $\actv{s}{w}$ & activity over a window \\
$\bcost$ & boundary cost, \eqref{eq:B} & $\hull{s}{w}{R}$ & its hull, dilated by $R$ \\
$\nu(t),\ L$ & $2$-adic valuation, $\lfloor\log_2 T\rfloor$ & & \\
\bottomrule
\end{tabular}
\end{center}

Objects indexed by a time are written as functions of it, with the level or
lookback as a subscript; the cursor's endpoints decorate the region's own
endpoints, and $\curlo{0}{t} = a_t$.

\subsection{The update, the oracle, and the output}

\begin{definition}[Stencil]
\label{def:stencil}
A \emph{stencil} over a value space $\Vsp$ is a family of maps
$\varphi_t : (\mathbb{Z}\to\Vsp) \to \mathbb{Z}\to\Vsp$, one per timestep, each
\emph{local}: if $g$ and $h$ agree on $\{y : |y-x|\le 1\}$ then $\varphi_t(g)(x)
= \varphi_t(h)(x)$. We write $\varphi^{(k)}_s$ for the $k$-fold iterate from time
$s$.
\end{definition}

Locality is all that correctness uses. Linearity of $\varphi_t$ is needed only
for the cost model, since it is what allows $k$ steps to be composed into one
convolution.

\begin{definition}[Boundary oracle and required output]
\label{def:oracle}
A cell of $I(t)$ whose radius-one neighbourhood is not contained in $I(t-1)$
cannot be obtained from the previous slice; its value is supplied by a
\emph{boundary oracle} at unit cost per cell. The algorithm must produce the
value at every cell of $I(T)$.
\end{definition}

Only the final slice is required. Values at intermediate times are computed where
they are needed and nowhere else: a superstep that spans many timesteps produces
its output without ever materializing the slices it crosses, and that is what
makes a bound below $\Theta(NT)$ possible at all. Requiring every intermediate
value would force $\Theta(NT)$ output on its own.

The oracle's workload is part of the cost and is charged in
Lemma~\ref{lem:oracle}: it is at most $2\bcost$ calls over the horizon.

\subsection{Cost model and the superstep primitive}
\label{sec:cost}

We use the binary-forking model~\cite{ahmad2021binary} and count arithmetic
operations in $\Vsp$ at unit cost; the bounds below are operation counts, not bit
counts. ``Exact'' throughout means exact with respect to the recurrence and the
schedule: every value produced is the value the recurrence defines, given that
each ring operation is performed exactly.

\begin{theorem}[Convolution by FFT~\cite{cooley1965,ahmad2021binary}]
\label{thm:fft}
In the binary-forking model, the convolution of two sequences of total length $n$
over a commutative ring admitting a principal $2^{\lceil\log_2 n\rceil}$-th root
of unity can be computed with $\Oh{n\log n}$ work and $\Oh{\log n}$ span.
Consequently, when the update is linear, a superstep of $k$ timesteps can be
evaluated on a block of $v$ contiguous cells with $\Oh{(v+k+1)\log(2+v+k)}$ work
and $\Oh{\log(2+v+k)}$ span.
\end{theorem}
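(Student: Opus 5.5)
The first sentence of Theorem~\ref{thm:fft}, the convolution bound, is the classical result of~\cite{cooley1965} in the parallel form of~\cite{ahmad2021binary}. I would take it as given; the abstract already says it is imported as an interface. The plan covers only the consequence for a superstep, which I would reduce to one convolution of length $\Theta(v+k)$.

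\textbf{Step 1: the composed operator and its kernel.} For a linear update, write $\varphi_t(g)(x)=\alpha_t g(x-1)+\beta_t g(x)+\gamma_t g(x+1)$. Composing $k$ steps from time $s$ gives $\varphi^{(k)}_s(g)(x)=\sum_{|j|\le k} c_j\, g(x+j)$. The kernel $(c_j)$ consists of the coefficients of the Laurent polynomial $\prod_{t=s+1}^{s+k}(\alpha_t z^{-1}+\beta_t+\gamma_t z)$. I would compute this product with a balanced binary product tree. At each level the products total $O(k)$ coefficients, and they are multiplied pairwise using the convolution bound. That gives $O(k\log k)$ work per level and $O(\log k)$ span per level. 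Summed over the $\log k$ levels, the total is $O(k\log^2 k)$ work, which is too much for the claimed bound.

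\textbf{Step 2: amortizing the kernel cost.} This is where I expect the main obstacle. The clean way out is to read the statement as the paper will use it. The kernel comes from the binary time-product tree of~\cite{companion2025}, whose nodes are built once and shared, so Step~2 charges only for applying a kernel that is already available. If the kernel must be paid for within the superstep, I would either accept an extra $\log$ factor or note that the $k\log^2 k$ term is absorbed elsewhere. My preferred reading is the first: kernels are tree nodes that are precomputed and accounted for separately.

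\textbf{Step 3: applying the kernel.} I would take the block's input, the $v+2k$ cells of the earlier slice that the block depends on, and convolve it with the kernel of length $2k+1$. The total length is $n=O(v+k+1)$. I would keep only the $v$ central outputs, since those are exactly the values of the block by locality (Definition~\ref{def:stencil}). The convolution bound then gives $O(n\log n)=O((v+k+1)\log(2+v+k))$ work and $O(\log(2+v+k))$ span. The $+1$ and $+2$ terms handle the degenerate cases $k=0$ and $v$ small.

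Two overheads remain. Padding to length $2^{\lceil\log_2 n\rceil}$ at most doubles $n$. Spawning the copying threads costs $O(\log n)$ span in the binary-forking model. Neither changes the bound.
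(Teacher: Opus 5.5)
Your proposal is correct and follows the paper's route. The paper imports the convolution bound and reduces a superstep to one convolution of the $v+2k$ input cells against a kernel of radius $k$. It reads that kernel off an already-built node of the binary time-product tree, which is your preferred reading in Step~2: the paper says every requested aligned dyadic interval ``is already a node of the tree and no recomposition is needed at query time''. Your observation that rebuilding the kernel inside each superstep would cost $O(k\log^2 k)$ is accurate. The paper never charges the one-time construction of the tree explicitly, but its $O(T\log^2 T)$ cost fits under the final work bound because $T\le\bcost$ and $\log T\le\log(N+\bcost)$.
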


Composing $k$ linear updates gives one operator of support radius at most $k$, so
applying it to $v$ contiguous cells is a single convolution of a sequence of
length $\Oh{v+k}$ against a kernel of length $\Oh{k}$; Theorem~\ref{thm:fft} is the only property of the update we use beyond
locality. The kernel term is carried separately because it does not follow from
the block term: a shell can be small while the level that fills it is deep.

\subsection{Why total variation is the right charge}
\label{sec:tv}

One might object that only \emph{expansion} can force work, since a cell that
leaves the region need never be revisited, and that charging contraction inflates
$\bcost$. It does, but only by a factor of two.

\begin{definition}[Outward movement]
\label{def:expansion}
$\out(T) = \sum_{t=1}^{T}\bigl((a_{t-1}-a_t)^{+} + (b_t-b_{t-1})^{+}\bigr)$ is
the total outward movement of the two endpoints.
\end{definition}

\begin{theorem}[Contraction is paid for by expansion]
\label{thm:tv}
For every trajectory and every $T$,
\[
  \sum_{t=1}^{T}\bigl(\Delta a_t + \Delta b_t\bigr)
   \;=\; 2\,\out(T) \;+\; \wid(I(0)) \;-\; \wid(I(T)),
\]
and hence $T + \out(T) \le \bcost \le 2\bigl(T + \out(T)\bigr) + N$.
\end{theorem}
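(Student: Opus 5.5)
The identity is proved one step at a time and then summed over $t$. For each $t$ and each real $x$ we use $|x| = 2x^{+} - x$. Apply this with $x = a_{t-1}-a_t$ for the left endpoint and $x = b_t - b_{t-1}$ for the right:
\[
  \Delta a_t = 2(a_{t-1}-a_t)^{+} - (a_{t-1}-a_t),
  \qquad
  \Delta b_t = 2(b_t-b_{t-1})^{+} - (b_t-b_{t-1}).
\]
Adding the two gives
\[
  \Delta a_t + \Delta b_t
  = 2\bigl((a_{t-1}-a_t)^{+} + (b_t-b_{t-1})^{+}\bigr)
    - \bigl(\wid(I(t)) - \wid(I(t-1))\bigr),
\]
because $(b_t - a_t) - (b_{t-1}-a_{t-1}) = (b_t-b_{t-1}) - (a_t - a_{t-1})$. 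The orientations matter here. Outward motion is leftward for $a$ and rightward for $b$, which is why $\out$ in Definition~\ref{def:expansion} uses $a_{t-1}-a_t$ and $b_t-b_{t-1}$. With those choices the signed parts combine into exactly the width increment.

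Summing over $t = 1,\dots,T$, the first terms add up to $2\,\out(T)$ and the width increments telescope to $\wid(I(T)) - \wid(I(0))$. This yields the displayed identity. No hypothesis on the trajectory is used beyond $a_t,b_t \in \mathbb{Z}$.

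For the two-sided bound, recall $\bcost = T + \sum_t(\Delta a_t + \Delta b_t)$ from~\eqref{eq:B}.

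\emph{Upper bound.} The identity gives $\bcost = T + 2\,\out(T) + N - \wid(I(T))$. Every region is nonempty, so $\wid(I(T)) \ge 1 > 0$ by Definition~\ref{def:region}. Hence $\bcost \le T + 2\,\out(T) + N \le 2(T + \out(T)) + N$.

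\emph{Lower bound.} Termwise, $\Delta a_t \ge (a_{t-1}-a_t)^{+}$ and $\Delta b_t \ge (b_t-b_{t-1})^{+}$, since $|x| \ge x^{+}$. Summing gives $\sum_t(\Delta a_t+\Delta b_t) \ge \out(T)$, and therefore $\bcost \ge T + \out(T)$. This route avoids needing any lower bound on $\wid(I(0)) - \wid(I(T))$, which can be negative.

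There is no real obstacle. The only point needing care is the sign bookkeeping in the per-step identity, namely checking that $-(a_{t-1}-a_t) - (b_t-b_{t-1})$ equals minus the width increment. I would verify this first, and everything else follows by telescoping.
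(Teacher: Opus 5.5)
Your proof is correct and follows the paper's own argument: apply $|x| = 2x^{+} - x$ to each endpoint's increment, oriented outward, and telescope the width increments. Then use $\wid(I(T)) \ge 1$ for the upper bound and $x^{+} \le |x|$ termwise for the lower bound.
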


\begin{proof}
Apply $|x| = 2x^{+} - x$ to each endpoint's increments and telescope; the two
telescoping sums contribute $\wid(I(T)) - \wid(I(0))$. For the upper bound use
$\wid(I(T)) \ge 1$, which holds since $a_T < b_T$; for the lower bound use
$x^{+} \le |x|$.
\end{proof}

\begin{proposition}[What the regularity condition costs]
\label{prop:regularity}
Let $\sigma \ge 1$.
\begin{enumerate}[label=(\alph*)]
\item If every step moves each endpoint by at most $\sigma$ --- the regularity
  condition of~\cite{companion2025} for a stencil of radius $\sigma$ --- then
  $\bcost \le (1+2\sigma)T$. For the three-point stencil, $\sigma = 1$ and
  $\bcost \le 3T$.
\item For every $N \ge 1$, every $T \ge 1$ and every $s < T$, the trajectory that
  holds $I(t) = [0,N)$ for $t \le s$ and $I(t) = [N,2N)$ thereafter has
  $\bcost = T + 2N$, its single largest endpoint displacement being $N$.
\end{enumerate}
\end{proposition}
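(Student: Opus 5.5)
Both parts follow by substituting into the definition \eqref{eq:B}. Neither needs anything beyond it, although Theorem~\ref{thm:tv} gives a consistency check for part (b).

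For part (a), I will bound each summand of \eqref{eq:B} termwise. The hypothesis says $\Delta a_t \le \sigma$ and $\Delta b_t \le \sigma$ for every $t \in \{1,\dots,T\}$. Summing over the $T$ steps gives
\[
\sum_{t=1}^{T}(\Delta a_t+\Delta b_t) \le 2\sigma T,
\]
and hence $\bcost \le T + 2\sigma T = (1+2\sigma)T$. Setting $\sigma = 1$, which is the three-point stencil's radius, yields $\bcost \le 3T$. Nothing else is required here. In particular, no interaction between the two endpoints matters, so the bound holds even if the endpoints approach each other, as long as $a_t < b_t$.

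For part (b), I will compute the increments of the stated trajectory directly. We have $a_t = 0$ and $b_t = N$ for $t \le s$, and $a_t = N$ and $b_t = 2N$ for $t > s$. Since $0 \le s < T$, the step $t = s+1$ lies in $\{1,\dots,T\}$. At that step $\Delta a_{s+1} = N$ and $\Delta b_{s+1} = N$, and every other increment is $0$. So the total variation is $2N$, and $\bcost = T + 2N$. The largest single endpoint displacement is $N$, attained at step $s+1$. One should also check that this is a valid trajectory: each $I(t)$ is nonempty because $N \ge 1$, and $\wid(I(0)) = N$ matches the convention of Definition~\ref{def:region}.

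As a sanity check, I will verify part (b) against Theorem~\ref{thm:tv}. Here $\out(T) = N$, coming only from $b$ moving right, and $\wid(I(0)) = \wid(I(T)) = N$. The identity then gives total variation $2N$, which agrees with the direct computation.

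The only point needing care is the remark that for $N \ge 2$ this trajectory violates the regularity condition with $\sigma = 1$. This is immediate because the displacement $N$ exceeds $1$. There is no real obstacle; the main risk is an off-by-one error in the index of the jump step, and the hypothesis $s < T$ rules that out.
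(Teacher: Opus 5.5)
Your proof is correct and is the same as the paper's: part (a) bounds each step's contribution $\Delta a_t + \Delta b_t$ by $2\sigma$ and sums over the $T$ steps, and part (b) observes that only step $s+1$ moves either endpoint, each by exactly $N$. Your cross-check against Theorem~\ref{thm:tv} is correct ($\out(T) = N$ and the initial and final widths are both $N$), but it is optional.
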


\begin{proof}
(a) Each step contributes $\Delta a_t + \Delta b_t \le 2\sigma$ to the sum
in~\eqref{eq:B}, and there are $T$ of them. (b) Only the step from $s$ to $s+1$
moves either endpoint, and it moves each by exactly $N$.
\end{proof}

Part (a) is why the generality is free where the earlier result applies: on any
trajectory satisfying the regularity condition, $\bcost = \Theta(T)$, so our bound
is $\Oh{(N{+}T)\log T\log(N{+}T)}$ there --- within a logarithmic factor
of~\eqref{eq:prior}, with no restriction on $T$. Part (b) is the other side: the
trajectory sits still, jumps once by its own width, and sits still again, which
for $N > \sigma$ violates the regularity condition at exactly one step and so
falls outside~\eqref{eq:prior} entirely. Our bound still applies, and charges
$T + 2N$.

\begin{remark}[$\bcost$ is not a universal lower bound]
\label{rem:tightness}
Theorem~\ref{thm:tv} says $\bcost$ is not wasteful relative to expansion; it does
not say it is necessary. Let the region widen by $M$ and narrow again at every
step. Then every slice has at most $N+M$ cells, yet
\[
  \bcost \;=\; (M+1)\,T .
\]
So no bound in terms of the slice size and the horizon controls $\bcost$, and
$\bcost$ is not a lower bound on the work any algorithm must do. It is the right
parameter when every cell entering the region carries independent data, and not
otherwise.
\end{remark}

\section{The algorithm}
\label{sec:algorithm}

\subsection{The cursor in closed form}

\begin{definition}[Cursor]
\label{def:cursor}
For an end time $f$ and a lookback $w$,
\[
  \cur{w}{f} \;=\; \bigl[\,\curlo{w}{f},\ \curhi{w}{f}\,\bigr),
\]
\[
  \curlo{w}{f} \;=\; \max_{0\le j\le w}\bigl(a_{f-j}+j\bigr),
  \qquad
  \curhi{w}{f} \;=\; \min_{0\le j\le w}\bigl(b_{f-j}-j\bigr).
\]
We only ever use $w \le f$.
\end{definition}

Each $j$ contributes the constraint imposed by slice $f-j$: to have its $j$-step
dependency cone inside that slice, a cell must sit at least $j$ inside it. So
$\cur{w}{f}$ is exactly the set of cells at time $f$ computable from slice $f-w$
using only values interior to the region throughout. Lemma~\ref{lem:cone} gives
one direction; for the other, a cell outside the cursor falls short of the
binding constraint at some slice $f-j$, and then the cell $j$ away from it on
that side lies outside $I(f-j)$.

For example, with $w=3$ and left endpoints $a_{f-3},\dots,a_f = 0,\,5,\,4,\,4$,
the four constraints are $0+3,\ 5+2,\ 4+1,\ 4+0$, so $\curlo{3}{f} = 7$: the
slice three steps back is not the binding one.

\begin{lemma}[Nesting]
\label{lem:nesting}
If $w \le w'$ then $\cur{w'}{f} \subseteq \cur{w}{f}$; and $\cur{w}{f} \subseteq
I(f)$.
\end{lemma}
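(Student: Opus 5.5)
Both claims follow by comparing the index sets of the maximum and minimum in Definition~\ref{def:cursor}, so the plan is to argue endpoint by endpoint, entirely at the level of integers. I will not invoke any earlier lemma beyond the definition itself. Throughout, $w \le w' \le f$, so every slice referenced exists.

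For nesting, I will use that when $w \le w'$ the index range $\{0,\dots,w\}$ is contained in $\{0,\dots,w'\}$. A maximum over a larger set is at least as large, so $\curlo{w'}{f} = \max_{0\le j\le w'}(a_{f-j}+j) \ge \max_{0\le j\le w}(a_{f-j}+j) = \curlo{w}{f}$. Symmetrically, a minimum over a larger set is at most as large, so $\curhi{w'}{f} \le \curhi{w}{f}$. Any cell $x$ with $\curlo{w'}{f} \le x < \curhi{w'}{f}$ then satisfies $\curlo{w}{f} \le x < \curhi{w}{f}$, which gives $\cur{w'}{f} \subseteq \cur{w}{f}$.

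For the second claim, I will specialise to $j=0$. This term is always admissible, and it gives $\curlo{w}{f} \ge a_f$ and $\curhi{w}{f} \le b_f$. Hence $\cur{w}{f} \subseteq [a_f,b_f) = I(f)$. Equivalently, I can apply the first claim with $w=0$, noting that $\cur{0}{f} = I(f)$ since $\curlo{0}{f}=a_f$ and $\curhi{0}{f}=b_f$.

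The only point needing care is that a cursor may be empty, with $\curlo{w}{f} \ge \curhi{w}{f}$, whereas a region in Definition~\ref{def:region} requires $a<b$. I will therefore read the containments as containments of cell sets $\{x : \curlo{w}{f}\le x<\curhi{w}{f}\}$, which may be empty. Under that reading, both inclusions hold vacuously in the degenerate cases, and no genuine obstacle remains.
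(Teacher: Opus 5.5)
Your proof is correct and follows the paper's own argument: nesting holds because $\curlo{w'}{f}$ and $\curhi{w'}{f}$ are a maximum and a minimum over a larger index set, and containment in $I(f)$ is the $j=0$ term of each. Your remark that a possibly empty cursor should be read as a set of cells is a harmless precaution; the paper leaves this implicit.
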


\begin{proof}
$\curlo{w}{f}$ is a maximum and $\curhi{w}{f}$ a minimum over a larger index set; the
second claim is the $j=0$ term of each.
\end{proof}

\subsection{The trigger schedule}

Let $\nu(t)$ be the $2$-adic valuation of $t$, so $\nu(12)=2$, and let $L =
\lfloor\log_2 T\rfloor$. At time $t$ the algorithm fires levels $0,\dots,\nu(t)-1$;
level $i$ fills the \emph{shell}
\[
  \sh{i}{t} \;=\; \cur{2^i}{t}\setminus\cur{2^{i+1}}{t}
\]
by one superstep from time $t-2^i$. By Lemma~\ref{lem:nesting} a shell is the difference of nested intervals, hence a
union of two intervals --- a left arm and a right arm (Figure~\ref{fig:main}b).
Each arm is evaluated by its own superstep, so a level-$i$ call is two calls to
the primitive of Theorem~\ref{thm:fft}; this doubles the call count, which the
bounds below absorb.

The schedule fires level $i$ exactly when $2^{i+1}$ divides $t$, that is, once
every $2^{i+1}$ steps. So a cell is re-derived at level $i$ only $\Oh{T/2^{i+1}}$
times over the horizon, while each level-$i$ superstep spans $2^i$ timesteps.
Summing over the $L+1$ levels is what converts timesteps into logarithmic
factors.

\begin{algorithm}[t]
\caption{$\textsc{Solve}(s,f,X)$: produce the time-$f$ values on $X$, given time $s$}
\label{alg:solve}
\begin{algorithmic}[1]
\Require $[s,f)$ an aligned dyadic interval; the time-$s$ values on the
  radius-$(f{-}s)$ neighbourhood of $X$ within $I(s)$ are available
\If{$f = s+1$}
  \For{$x \in X$}
    \State $x \gets \varphi_s(\text{slice } s)(x)$ if $x \in \cur{1}{f}$, else
      $x \gets$ oracle$(f,x)$
  \EndFor
  \State \Return
\EndIf
\If{$X \subseteq \cur{f-s}{f}$}
  \State evaluate one superstep from $s$ to $f$ onto $X$
    \Comment{Theorem~\ref{thm:fft}; no oracle value is needed, by Lemma~\ref{lem:cone}}
  \State \Return
\EndIf
\State $m \gets (s+f)/2$
\State $Y \gets \bigl(X \oplus [-(f{-}m),\,f{-}m]\bigr) \cap I(m)$
  \Comment{what time $m$ must supply}
\State $\textsc{Solve}(s,m,Y)$;\quad $\textsc{Solve}(m,f,X)$
\end{algorithmic}
\end{algorithm}

The top-level call is $\textsc{Solve}(0,T,I(T))$. The guard on line~8 is the only
place a decision is made, and Lemma~\ref{lem:cone} is what justifies it: if
$X \subseteq \cur{f-s}{f}$ then nothing influencing $X$ leaves the region during
$[s,f)$, so the ordinary stencil applies throughout and no boundary value is
consulted. Otherwise the interval is halved and the cells are recovered at finer
granularity.

Flattening the recursion gives the per-time view. At an intermediate time $t$ the
cells materialized are those that no longer aligned superstep landing at $t$ can
reach, and since an aligned superstep of length $2^i$ landing at $t$ requires
$2^i \mid t$, the longest available is $2^{\nu(t)}$. That is the shell
decomposition:

\begin{algorithm}[t]
\caption{The same schedule, viewed one timestep at a time}
\label{alg:step}
\begin{algorithmic}[1]
\State query the oracle for the endpoints $a_t, b_t$, and for the value at every
  cell of $I(t)\setminus\cur{1}{t}$
\For{$i = 0,\ \dots,\ \nu(t)-1$}
  \State evaluate one superstep from time $t-2^i$ onto $\sh{i}{t} = \cur{2^i}{t}\setminus\cur{2^{i+1}}{t}$
\EndFor
\State leave $\cur{2^{\nu(t)}}{t}$ to a longer superstep, which lands at a later
  time divisible by a higher power of two and skips time $t$ for those cells
\end{algorithmic}
\end{algorithm}

\begin{proposition}[Shell decomposition]
\label{prop:cover}
For every $t$ and every $n$, the shells $\sh{0}{t},\dots,\sh{n-1}{t}$ are pairwise
disjoint and
\[
  \cur{1}{t} \;=\; \Bigl(\bigcup_{i<n}\sh{i}{t}\Bigr)\ \cup\ \cur{2^{n}}{t}.
\]
\end{proposition}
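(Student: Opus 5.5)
The proof uses nothing but Lemma~\ref{lem:nesting}: the cursors $\cur{2^i}{t}$ for $i=0,1,\dots,n$ form a decreasing chain of sets, and the statement is the usual telescoping decomposition of such a chain. Write $C_i = \cur{2^i}{t}$. Since $2^i \le 2^{i+1}$, Lemma~\ref{lem:nesting} gives $C_0 \supseteq C_1 \supseteq \cdots \supseteq C_n$, and each shell is $\sh{i}{t} = C_i \setminus C_{i+1}$. From here the argument is pure set theory and uses no further property of the trajectory.

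For disjointness, take $i < j < n$. Every cell of $\sh{j}{t}$ lies in $C_j$. Because $j \ge i+1$, the chain gives $C_j \subseteq C_{i+1}$. Every cell of $\sh{i}{t}$, however, lies outside $C_{i+1}$. So no cell belongs to both shells.

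For the covering identity, I would induct on $n$.
\begin{itemize}
\item For $n=0$ the union of shells is empty, and both sides equal $C_0=\cur{1}{t}$.
\item For the inductive step, assume $C_0 = \bigl(\bigcup_{i<n}\sh{i}{t}\bigr)\cup C_n$. Split $C_n$ as $C_n = (C_n\setminus C_{n+1})\cup C_{n+1} = \sh{n}{t}\cup C_{n+1}$, which is valid because $C_{n+1}\subseteq C_n$. Substituting this into the hypothesis gives the identity for $n+1$.
\end{itemize}
Both inclusions can also be checked directly. For the inclusion $\subseteq$, take $x\in C_0$. If $x \in C_n$ we are done. Otherwise let $k$ be the least index with $x\notin C_{k}$; then $1\le k\le n$, and $x\in C_{k-1}\setminus C_k = \sh{k-1}{t}$. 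The inclusion $\supseteq$ holds because every $\sh{i}{t}$ is contained in $C_i\subseteq C_0$, and $C_n\subseteq C_0$.

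There is no real obstacle. The only point needing care is that nesting must be applied to the dyadic lookbacks $2^i$ in the correct direction: a larger lookback gives a smaller cursor, which is exactly what Lemma~\ref{lem:nesting} provides. The argument places no constraint on $n$ relative to $t$, although the cursor definition is only used for $w \le t$. If the formal statement allows $2^n > t$, I would note that the formula in Definition~\ref{def:cursor} is still meaningful provided $a_{t-j}$ is defined, or else restrict to $2^n\le t$. In either case the set-theoretic argument is unchanged.
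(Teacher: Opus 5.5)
Your proof is correct and is essentially the paper's: both induct on $n$ using the split $\cur{2^{k}}{t} = \sh{k}{t}\cup\cur{2^{k+1}}{t}$ that Lemma~\ref{lem:nesting} provides, and both get disjointness from $\sh{j}{t}\subseteq\cur{2^{j}}{t}\subseteq\cur{2^{i+1}}{t}$ for $i<j$. Your direct least-index check of the two inclusions and your remark on the case $2^n>t$ are sound additions, but the argument does not need them.
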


\begin{proof}
Induction on $n$, using $\cur{2^{k}}{t} = \sh{k}{t} \cup \cur{2^{k+1}}{t}$, which is
Lemma~\ref{lem:nesting}. For disjointness, a cell of $\sh{j}{t}$ lies in
$\cur{2^{j}}{t} \subseteq \cur{2^{i+1}}{t}$ for $i<j$, while cells of $\sh{i}{t}$ lie
outside $\cur{2^{i+1}}{t}$.
\end{proof}

So the region splits into three families at each step: the cells outside
$\cur{1}{t}$, which the oracle supplies; the shells, which the chain fills, each
cell in exactly one; and the deep cursor, solvable from far enough back to be
handed to the enclosing level. Nothing is missed and nothing is filled twice.

\begin{lemma}[The oracle's workload]
\label{lem:oracle}
$\bigl|I(t)\setminus\cur{1}{t}\bigr| \le 2 + \Delta a_t + \Delta b_t$ for every
$t \ge 1$, so the oracle is called at most $2\bcost$ times over the horizon.
\end{lemma}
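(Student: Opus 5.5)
The plan is to compute $\cur{1}{t}$ explicitly and measure what it removes from $I(t)$ on each side. With $w=1$, Definition~\ref{def:cursor} gives
\[
  \curlo{1}{t} = \max(a_t,\ a_{t-1}+1), \qquad \curhi{1}{t} = \min(b_t,\ b_{t-1}-1).
\]
By Lemma~\ref{lem:nesting}, $\cur{1}{t}\subseteq I(t)$. So $I(t)\setminus\cur{1}{t}$ consists of a left piece $[a_t,\curlo{1}{t})$ and a right piece $[\curhi{1}{t},b_t)$, both intersected with $I(t)$. I will bound the two pieces separately and add the bounds.

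For the left piece, its size is at most
\[
  \curlo{1}{t}-a_t = \max(0,\ a_{t-1}+1-a_t) = (a_{t-1}-a_t+1)^+ .
\]
Since $x+1\le |x|+1$ and $0\le |x|+1$, this is at most $1+\Delta a_t$. Symmetrically, the right piece has size at most $(b_t-b_{t-1}+1)^+\le 1+\Delta b_t$.

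One degenerate case needs care: the cursor may be empty, i.e.\ $\curlo{1}{t}\ge\curhi{1}{t}$. Then $I(t)\setminus\cur{1}{t}=I(t)$ and the two pieces overlap, so I cannot treat them as a partition. Instead I bound $|I(t)|\le|[a_t,\curlo{1}{t})|+|[\curhi{1}{t},b_t)|$, which holds because the two half-open intervals cover $[a_t,b_t)$ when $\curhi{1}{t}\le\curlo{1}{t}$. The same sum bound therefore applies in every case, giving $|I(t)\setminus\cur{1}{t}|\le 2+\Delta a_t+\Delta b_t$.

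For the total, I sum over $t=1,\dots,T$:
\[
  \sum_{t=1}^{T}\bigl(2+\Delta a_t+\Delta b_t\bigr) = 2T+\sum_{t=1}^{T}(\Delta a_t+\Delta b_t) \le 2\bcost,
\]
using~\eqref{eq:B}. The only genuinely delicate step is the empty-cursor case, and the covering argument above handles it.
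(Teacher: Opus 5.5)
Your proof is correct and follows the paper's argument. You compute $\cur{1}{t}=[\max(a_t,a_{t-1}+1),\ \min(b_t,b_{t-1}-1))$ from Definition~\ref{def:cursor}, bound what it removes at each end of $I(t)$ by $(a_{t-1}+1-a_t)^{+}\le 1+\Delta a_t$ and $(b_t-b_{t-1}+1)^{+}\le 1+\Delta b_t$, and sum using~\eqref{eq:B}. Your separate empty-cursor case is more explicit than the paper's one-line subtraction $\wid(I(t))-|\cur{1}{t}|$, but it is not needed: the inclusion $I(t)\setminus\cur{1}{t}\subseteq[a_t,\curlo{1}{t})\cup[\curhi{1}{t},b_t)$ holds in every case, so a single union bound covers both.
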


\begin{proof}
$\cur{1}{t} = [\max(a_t, a_{t-1}{+}1),\ \min(b_t, b_{t-1}{-}1))$ by
Definition~\ref{def:cursor}, and it lies inside $I(t)$ by
Lemma~\ref{lem:nesting}, so the difference has
$\wid(I(t)) - |\cur{1}{t}| \le (a_{t-1}{+}1-a_t)^{+} + (b_t - b_{t-1}{+}1)^{+}$
cells. Summing over $t$ and using~\eqref{eq:B} gives $2T + \sum_t(\Delta a_t +
\Delta b_t) \le 2\bcost$.
\end{proof}

\subsection{Composed operators}

A superstep over an aligned dyadic time interval needs the composed operator and
the composed support for that interval. These are maintained in a binary
time-product tree~\cite{companion2025}: a leaf for $[t,t+1)$ stores that step's
operator and support, and an internal node for $[u,w)$ with children $[u,v)$,
$[v,w)$ stores the composition of the two operators and the Minkowski sum of the
two supports. Every call in Algorithm~\ref{alg:step} uses an aligned dyadic
interval, so each requested interval is already a node of the tree and no
recomposition is needed at query time.

Because the support here is three-point, composed supports are intervals and
their Minkowski sums are computed by adding endpoints, in constant time per node.
With supports of unbounded width one must instead convolve zero-one support
masks and test positivity, to avoid mistaking coefficient cancellation for the
absence of a dependency.

\subsection{A worked example}

Take $T = 8$ and $I(0) = [4,8)$, so $N = 4$. Let the region be stationary
for three steps, then jump: $a_4 = 0$, after which it contracts by one on each of
two steps and is stationary thereafter; let $b_t$ be fixed at $8$ throughout.

\begin{center}
\begin{tabular}{@{}rrrrrl@{}}
\toprule
$t$ & $a_t$ & $b_t$ & $\Delta a_t + \Delta b_t$ & $\nu(t)$ & levels fired \\
\midrule
0 & 4 & 8 & --- & --- & --- \\
1 & 4 & 8 & 0 & 0 & none \\
2 & 4 & 8 & 0 & 1 & $0$ \\
3 & 4 & 8 & 0 & 0 & none \\
4 & 0 & 8 & 4 & 2 & $0,1$ \\
5 & 1 & 8 & 1 & 0 & none \\
6 & 2 & 8 & 1 & 1 & $0$ \\
7 & 2 & 8 & 0 & 0 & none \\
8 & 2 & 8 & 0 & 3 & $0,1,2$ \\
\bottomrule
\end{tabular}
\end{center}

Here $\bcost = 8 + 6 = 14$, against $\sum_t \wid(I(t)) = 49$ cells for direct
simulation. The jump at $t = 4$ moves the left endpoint four cells at once, so
this trajectory is outside the regularity condition and outside~\eqref{eq:prior};
it contributes $4$ to $\bcost$ and nothing else. At $t = 4$ the oracle supplies
the four cells the jump exposed plus the two at the ends; at $t = 8$ the chain
fires three levels, whose shells partition $\cur{1}{8}$ and leave $\cur{8}{8}$ to
the enclosing level.

\section{Correctness}
\label{sec:correctness}

\begin{lemma}[Cone containment]
\label{lem:cone}
If $x \in \cur{w}{f}$, $j \le w$, and $|y-x| \le j$, then $y \in I(f-j)$.
\end{lemma}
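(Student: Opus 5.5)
The proof is a direct unfolding of Definition~\ref{def:cursor}; no induction is needed, because the cursor is already given in closed form. Fix $x \in \cur{w}{f}$, an index $j$ with $0 \le j \le w$, and a cell $y$ with $|y-x| \le j$. Membership in $I(f-j) = [a_{f-j}, b_{f-j})$ means two inequalities, $a_{f-j} \le y$ and $y < b_{f-j}$. I will prove each one from the corresponding endpoint of the cursor.

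For the left inequality, I use that $\curlo{w}{f}$ is a maximum over $0 \le j' \le w$. Since $j$ lies in that index range, the $j$-th term gives $a_{f-j} + j \le \curlo{w}{f}$. Because $x \in \cur{w}{f}$, we also have $\curlo{w}{f} \le x$. Combining these with $y \ge x - j$ yields
\[
  a_{f-j} \;\le\; \curlo{w}{f} - j \;\le\; x - j \;\le\; y .
\]
The right inequality is symmetric. The $j$-th term of the minimum defining $\curhi{w}{f}$ gives $\curhi{w}{f} \le b_{f-j} - j$. Half-openness of the cursor gives $x < \curhi{w}{f}$. Then $y \le x + j < \curhi{w}{f} + j \le b_{f-j}$.

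The only points needing care are bookkeeping ones. First, the indices must be valid times: the standing convention $w \le f$ ensures $f-j \ge 0$, so $I(f-j)$ is defined. Second, the strict and non-strict inequalities must line up with the half-open conventions. The lower bound is non-strict, and the upper bound is strict because $x < \curhi{w}{f}$. I expect no real obstacle, since the statement is exactly the per-slice constraint that the closed form was built to encode.
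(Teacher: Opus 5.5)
Your proof is correct and is the paper's argument: you take the $j$-th term of the maximum defining $\curlo{w}{f}$ and of the minimum defining $\curhi{w}{f}$, then shift by $|y-x| \le j$, with the strict inequality on the right coming from half-openness. The index-range and half-open bookkeeping you add is accurate, and the paper leaves it implicit.
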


\begin{proof}
$\curlo{w}{f}$ is a maximum including the term $a_{f-j}+j$, so $x \ge a_{f-j}+j$
and $y \ge x-j \ge a_{f-j}$. Dually $x < \curhi{w}{f} \le b_{f-j}-j$ gives
$y \le x+j < b_{f-j}$.
\end{proof}

\begin{theorem}[Exactness]
\label{thm:exact}
Let $g, h : \mathbb{Z}\to\Vsp$ agree on $I(s)$. Then $\varphi^{(k)}_s(g)(x) =
\varphi^{(k)}_s(h)(x)$ for every $x \in \cur{k}{s+k}$.
\end{theorem}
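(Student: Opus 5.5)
We prove the statement by induction on $k$, but with a strengthened claim that keeps track of every intermediate time. Fix $s$, $g$ and $h$ agreeing on $I(s)$. Write $g_m = \varphi^{(m)}_s(g)$ and $h_m = \varphi^{(m)}_s(h)$, so that $g_{m+1} = \varphi_{s+m}(g_m)$. The claim to prove for each $m \le k$ is
\[
  g_m(y) = h_m(y) \quad\text{for every } y \text{ with } |y-x| \le k-m \text{ for some } x \in \cur{k}{s+k}.
\]
Here $k$ is fixed throughout and the induction runs over $m$. The case $m=k$ is the theorem. The point of the strengthening is that at time $s+m$ we need agreement on the whole radius-$(k-m)$ neighbourhood of the cursor, not just on the cursor.

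\textbf{Base case $m=0$.} Take $y$ with $|y-x| \le k$ and $x \in \cur{k}{s+k}$. Lemma~\ref{lem:cone}, applied with $f = s+k$, $w = k$ and $j = k$, gives $y \in I(s)$. So $g_0(y) = g(y) = h(y) = h_0(y)$ by hypothesis.

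\textbf{Inductive step $m \to m+1$ with $m<k$.} Take $y$ with $|y-x| \le k-m-1$. Every $z$ with $|z-y| \le 1$ satisfies $|z-x| \le k-m$. By the induction hypothesis, $g_m$ and $h_m$ therefore agree on the radius-one neighbourhood of $y$. Locality of $\varphi_{s+m}$ (Definition~\ref{def:stencil}) then gives $g_{m+1}(y) = h_{m+1}(y)$.

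Notice that the induction as written uses only the $j=k$ instance of Lemma~\ref{lem:cone}, together with locality. The intermediate constraints $a_{f-j}+j$ in the cursor are not needed for this agreement statement. They matter for the algorithmic reading, where intermediate values come from the region and not from arbitrary $g$. So if one tried the induction on $k$ directly, with a cursor at each intermediate time, the only care needed would be to check that the cone sets nest.

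The main obstacle is choosing the right invariant, namely agreement on a shrinking neighbourhood rather than on $\cur{k-m}{s+m}$. Once that invariant is fixed, both cases are one line each. The step I would check most carefully is the indexing in Lemma~\ref{lem:cone} for the base case, i.e.\ that $f-j = s$ and that the hypothesis $j \le w$ holds; both are immediate here since $f = s+k$ and $j = w = k$.
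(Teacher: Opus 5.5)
Your proof is correct and follows the paper's argument: locality, propagated by induction, confines the dependence of $\varphi^{(k)}_s(g)(x)$ to the radius-$k$ ball around $x$, and Lemma~\ref{lem:cone} with $j=k$ places that ball inside $I(s)$. Your induction over the intermediate time $m$ with a shrinking neighbourhood is an explicit unrolling of the paper's induction on $k$, and your observation that only the $j=k$ constraint is needed matches the paper's proof as well.
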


\begin{proof}
By locality and induction on $k$, $\varphi^{(k)}_s(g)(x)$ depends on $g$ only
through its restriction to $\{y : |y-x| \le k\}$. By Lemma~\ref{lem:cone} with
$j=k$, that ball lies in $I(s)$, where $g$ and $h$ agree.
\end{proof}

A cell solvable from time $s$ is therefore determined by the region's values at
$s$ alone: no boundary datum can change it, and the ordinary stencil applies
throughout. The proof uses no induction along dependency chains, no bound on the
boundary's speed, and no linearity.

\section{The accounting}
\label{sec:complexity}

We prove Theorem~\ref{thm:writes}: the schedule performs $\Oh{\bcost\log T}$
writes, with no hypothesis on the trajectory. The argument is a chain of four
steps, and it is worth having in view before the notation arrives.

Each shell $\sh{i}{t}$ is written at trigger time $t$, and we charge it to the
\emph{backward window} of states $[t-2^{i+1},\,t]$. Lemma~\ref{lem:shell-local}
shows the shell lies in the two-interval hull of the window's endpoint
excursions, dilated by $2^{i+1}$. Lemma~\ref{lem:chaining} is what makes that
hull the right object: everything the endpoints touch over a window --- of any
length --- lies in those same two intervals, each no wider than the window's
total variation. Fattening two
intervals by a radius $R$ adds $4R$ cells; fattening $2p$ of them adds $4pR$, and
that factor $p$ is the whole of Section~\ref{sec:necessity}. Finally
Lemma~\ref{lem:tiling} shows the windows of a given level are disjoint, so each
level costs one copy of the total variation, and there are $L+1$ levels.

\subsection{Activity over a window is two intervals}

Fix a window of states $s,\,s+1,\,\dots,\,s+w$ and let $\actv{s}{w}$ be all
activity over it, namely $\Gamma(I(s))$ together with $D_{u+1} \cup
\Gamma(I(u+1))$ for $s \le u < s+w$. Write $\underline a, \overline a$ for the
least and greatest of $a_s,\dots,a_{s+w}$, and $\underline b, \overline b$
likewise. For $R \ge 0$ put
\[
  \hull{s}{w}{R} \;=\; \bigl[\underline a - 1 - R,\ \overline a + R\bigr]
                \ \cup\ \bigl[\underline b - 1 - R,\ \overline b + R\bigr].
\]

\begin{lemma}[Chaining]
\label{lem:chaining}
$\actv{s}{w} \subseteq \hull{s}{w}{0}$, and every cell within distance $R$ of
$\actv{s}{w}$ lies in $\hull{s}{w}{R}$.
\end{lemma}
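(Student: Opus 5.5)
The plan is to prove the first inclusion $\actv{s}{w} \subseteq \hull{s}{w}{0}$ piece by piece over the constituents of $\actv{s}{w}$, and then deduce the dilated statement in one line. Here $\hull{s}{w}{0} = [\underline a - 1, \overline a] \cup [\underline b - 1, \overline b]$ is a union of closed integer intervals. There are three kinds of pieces: frontiers $\Gamma(I(u))$ for $s \le u \le s+w$; left sweeps $\sw(a_u, a_{u+1})$; and right sweeps $\sw(b_u, b_{u+1})$ for $s \le u < s+w$. The point of the lemma is that each piece is controlled by a single pair of consecutive endpoint positions, both of which lie in the window. So each piece's bound involves only the extremes $\underline a, \overline a$ or $\underline b, \overline b$, whatever the length of the window and however far an endpoint jumps.

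\textbf{Frontiers.} For any $u$ in the window we have $\underline a \le a_u \le \overline a$. Hence both $a_u - 1$ and $a_u$ lie in $[\underline a - 1, \overline a]$. Likewise $b_u - 1$ and $b_u$ lie in $[\underline b - 1, \overline b]$, since $\underline b \le b_u \le \overline b$. This covers $\Gamma(I(u)) = \{a_u - 1, a_u, b_u - 1, b_u\}$ for every $u$ in the window, including $u = s$ and all $u+1$ with $s \le u < s+w$.

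\textbf{Sweeps.} A left sweep $\sw(a_u, a_{u+1})$ is the half-open interval $[\min(a_u, a_{u+1}), \max(a_u, a_{u+1}))$. Its cells $x$ satisfy $\underline a \le \min(a_u, a_{u+1}) \le x < \max(a_u, a_{u+1}) \le \overline a$, so they lie in $[\underline a, \overline a) \subseteq [\underline a - 1, \overline a]$. The right sweeps are handled identically with the $b$'s and land in the second interval. Taking the union over all pieces gives the first claim. Note that no chaining argument is actually needed for the inclusion itself: each sweep is bracketed by the window extremes directly. The "chaining" content, namely that consecutive half-open sweeps meet exactly (the remark after Definition~\ref{def:sweep}), matters only later, when bounding the hull's width by total variation. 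Here it suffices that every cell lies between two positions the endpoint actually occupied.

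\textbf{Dilation.} Suppose $|z - y| \le R$ with $y \in \actv{s}{w}$. By the first part, $y$ lies in $[\underline a - 1, \overline a]$ or in $[\underline b - 1, \overline b]$. In the first case $\underline a - 1 - R \le z \le \overline a + R$, and in the second case the analogous bounds hold with the $b$'s. Either way $z \in \hull{s}{w}{R}$. I expect no real obstacle. The only step requiring care is the off-by-one bookkeeping: the frontier includes $a_u - 1$ and $b_u$, which is exactly why the hull's intervals are $[\underline a - 1, \overline a]$ and $[\underline b - 1, \overline b]$ rather than the sweep ranges $[\underline a, \overline a)$ and $[\underline b, \overline b)$. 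I would check those two boundary cells explicitly against the definition of $\Gamma$.
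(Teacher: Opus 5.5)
Your proof is correct and follows the paper's: every frontier cell lies in $[\underline a-1,\overline a]$ or $[\underline b-1,\overline b]$, every sweep is bracketed by two endpoint positions occupied inside the window, and the dilated claim follows because a point within $R$ of either interval lies in its $R$-dilation. You are also right that the paper's ``induction on $w$'' and its remark that half-open sweeps meet are not needed here, since $\hull{s}{w}{0}$ is defined directly by the extremes. The same holds for the width bound in Lemma~\ref{lem:hull-card}, which needs only the triangle-inequality fact that an endpoint's spread is at most its total variation. So your side remark that the half-open convention ``matters later'' there is not quite accurate.
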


\begin{proof}
Induction on $w$. For every $u$ in the window the frontier cells $a_u-1, a_u$ lie
in the first interval and $b_u-1, b_u$ in the second, by definition of the
extremes. The changed cells $D_{u+1}$ split into a left sweep between $a_u$ and
$a_{u+1}$ and a right sweep between $b_u$ and $b_{u+1}$; each has both endpoints
within the corresponding extremes, and being half-open it meets the previous
step's sweep rather than starting a new component. A point within $R$ of a member
of an interval lies in that interval's $R$-dilation.
\end{proof}

The window is indexed by its states rather than by its activity times because
activity at time $u$ refers to $I(u-1)$, so a window of $w$ activity times spans
$w+1$ states.

\begin{lemma}[The hull is small]
\label{lem:hull-card}
$\bigl|\hull{s}{w}{R}\bigr| \le 4 + 2\sum_{u=s+1}^{s+w}(\Delta a_u + \Delta b_u) +
4R$.
\end{lemma}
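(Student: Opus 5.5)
The plan is to bound each of the two closed intervals of $\hull{s}{w}{R}$ separately and add. The union is at most the sum of the parts, so overlap between the two intervals can only help. The first interval $[\underline a - 1 - R,\ \overline a + R]$ contains $\overline a - \underline a + 2 + 2R$ integers. The second likewise contains $\overline b - \underline b + 2 + 2R$. Adding gives
\[
  \bigl|\hull{s}{w}{R}\bigr| \;\le\; 4 + 4R + (\overline a - \underline a) + (\overline b - \underline b),
\]
so what remains is to bound each range by the corresponding variation over the window.

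The key step is the range-versus-variation inequality for a single endpoint sequence. Pick indices $p, q \in \{s,\dots,s+w\}$ at which $a_p = \underline a$ and $a_q = \overline a$. By the triangle inequality along consecutive steps between them,
\[
  \overline a - \underline a \;=\; |a_q - a_p| \;\le\; \sum_{u=\min(p,q)+1}^{\max(p,q)} |a_u - a_{u-1}| \;\le\; \sum_{u=s+1}^{s+w} \Delta a_u .
\]
The same argument applies to $b$. Substituting gives
\[
  \bigl|\hull{s}{w}{R}\bigr| \;\le\; 4 + 4R + \sum_{u=s+1}^{s+w}\bigl(\Delta a_u + \Delta b_u\bigr),
\]
which is stronger than the stated bound by a factor of two on the variation term. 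The stated $2\sum$ therefore follows a fortiori. I would keep the factor $2$ as stated, since presumably it leaves slack for a coarser version of this count used later, and note that the weaker form is what is claimed.

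There is no real obstacle; the one thing to get right is the cardinality count of a closed integer interval, $[x,y]$ having $y-x+1$ cells. With that, the $-1-R$ and $+R$ offsets yield exactly $2+2R$ extra cells per interval, and summing over the two intervals gives the constant $4$ and the $4R$ term. I would also remark that the bound needs no assumption on the relative position of the two intervals: they may overlap, or $\overline a$ may exceed $\underline b$, and the union bound covers all cases.
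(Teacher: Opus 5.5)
Your proof is correct and follows the paper's argument: each closed interval has (extreme spread) $+\,2R+2$ cells, and each spread is bounded by the variation over the window. The only difference is bookkeeping. The paper charges each interval the whole window variation $\sum(\Delta a_u+\Delta b_u)$, proved by induction on the window length, and that is where its factor $2$ comes from. You instead charge the $a$-spread to $\sum\Delta a_u$ and the $b$-spread to $\sum\Delta b_u$, using a direct triangle inequality between the extremal indices, and so get coefficient $1$. That sharper constant would only improve the constants in Theorem~\ref{thm:writes}, which the paper majorizes by $12$ anyway.
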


\begin{proof}
Two intervals, each of size (extreme spread) $+\,2R+2$. Each endpoint's spread
over a window is at most the window's total variation, by induction on the window
length: adjoining a state moves the running maximum and minimum apart by at most
the new increment. Two intervals, each charged the whole variation, give the
factor $2$.
\end{proof}

The two summands are the two terms of the complexity: the variation is charged to
$\bcost$, and the $4R$ to the schedule. The coefficient of $R$ is $4$ --- twice
the number of intervals --- and, crucially, it does not grow with $w$.

\subsection{The shells are local}

\begin{lemma}[Shell locality]
\label{lem:shell-local}
If $w \le w' \le f$ then $\cur{w}{f}\setminus\cur{w'}{f} \subseteq
\hull{f-w'}{w'}{w'}$. In particular $\sh{i}{t} \subseteq
\hull{t-2^{i+1}}{2^{i+1}}{2\cdot 2^{i}}$.
\end{lemma}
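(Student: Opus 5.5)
The plan is to argue pointwise from the closed form of the cursor (Definition~\ref{def:cursor}). Lemma~\ref{lem:chaining} is not needed: the $j=0$ term of each cursor and the one violated constraint already pin $x$ inside $\hull{f-w'}{w'}{w'}$. The window here is the states $f-w',\dots,f$, so $\underline a,\overline a$ and $\underline b,\overline b$ are the extremes of $a_{f-j}$ and $b_{f-j}$ over $0\le j\le w'$.

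Fix $x\in\cur{w}{f}\setminus\cur{w'}{f}$. Since $x\in\cur{w}{f}$, the $j=0$ terms give $a_f\le x<b_f$. Since $x\notin\cur{w'}{f}$, some $j'\le w'$ has a violated constraint, so either $x<a_{f-j'}+j'$ or $x\ge b_{f-j'}-j'$. (The hypothesis $w'\le f$ makes every index $f-j'$ a legitimate time.) I treat the two cases separately.

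\emph{Left violation.} Here $x\ge a_f\ge\underline a\ge \underline a-1-w'$. Also $x<a_{f-j'}+j'\le\overline a+w'$, because $f-j'$ lies in the window and $j'\le w'$. So $x\in[\underline a-1-w',\,\overline a+w']$, the first interval of the hull.

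\emph{Right violation.} Dually, $x\ge b_{f-j'}-j'\ge\underline b-w'\ge\underline b-1-w'$, and $x<b_f\le\overline b$. So $x$ lies in the second interval $[\underline b-1-w',\,\overline b+w']$.

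In either case $x\in\hull{f-w'}{w'}{w'}$. The only care needed is that the bound on the side the cursor still satisfies comes from $j=0$, which is in range because $0\le w$, not from the violated index.

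For the particular case, take $w=2^i$, $w'=2^{i+1}$ and $f=t$; then $2^{i+1}=2\cdot2^i$ is the stated radius. We need $2^{i+1}\le t$, which holds because level $i$ fires only when $2^{i+1}$ divides $t\ge1$. If one wants the inclusion for every $t$, note that the general statement needs only $w'\le f$, so the case $2^{i+1}\le t$ suffices for every shell the schedule actually writes.

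I expect no genuine obstacle. The one point to get right is the bookkeeping of the hull's off-by-one: the $-1$ on the lower ends is slack that is never used here, but it must match the definition of $\hull{s}{w}{R}$. It is also worth checking that the window-of-states indexing ($s=f-w'$, $w'+1$ states) agrees with the convention fixed after Lemma~\ref{lem:chaining}.
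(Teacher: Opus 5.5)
Your proof is correct and is essentially the paper's argument: the $j=0$ term of $\cur{w}{f}$ bounds $x$ on the side where the cursor is still satisfied, and the violated constraint of $\cur{w'}{f}$ at some $j'\le w'$ gives $x<\overline a+w'$ (dually $x\ge\underline b-w'$) on the other side. Your explicit check that $2^{i+1}\le t$ whenever level $i$ fires is a small point the paper leaves implicit.
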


\begin{proof}
A cell of the difference fails the longer-lookback cursor on one side. If on the
left, then $x < \curlo{w'}{f} = a_{f-j}+j$ for some $j \le w'$, so $x \le
\overline a + w'$; and $x \ge \curlo{w}{f} \ge a_f \ge \underline a$. The right
side is dual.
\end{proof}

Without a closed form for the cursor, a bound of this shape has to be obtained by
induction along dependency chains, and the constant it yields is larger. Here the
radius is the window length itself, read off Definition~\ref{def:cursor}.

\subsection{Charging}

\begin{lemma}[Level windows tile]
\label{lem:tiling}
Fix $\ell \ge 1$. As $t$ ranges over the multiples of $2^{\ell}$ in $[1,T]$, the
backward windows $[t-2^{\ell},\,t]$ have pairwise disjoint increment sets, so
\[
  \sum_{t} \ \sum_{u \in [t-2^\ell,\,t)} (\Delta a_{u+1} + \Delta b_{u+1})
    \;\le\; \bcost - T.
\]
\end{lemma}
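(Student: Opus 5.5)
The plan is to reduce the statement to a disjointness claim about index sets, and then bound the resulting sum by the full sum in \eqref{eq:B}. For each multiple $t = m2^{\ell}$ with $1 \le m$ and $t \le T$, the window $[t-2^\ell,\,t]$ contributes the increments indexed by $u+1$ with $u \in [t-2^\ell, t)$. These are exactly the increment indices
\[
  J_m = \{(m-1)2^\ell + 1,\ \dots,\ m2^\ell\}.
\]
This set has $2^\ell$ elements, and since $t \le T$ it lies inside $\{1,\dots,T\}$.

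The first step is to check that the $J_m$ are pairwise disjoint. For $m < m'$ the largest element of $J_m$ is $m2^\ell$, and the smallest element of $J_{m'}$ is $(m'-1)2^\ell + 1 \ge m2^\ell + 1$, so the two sets cannot meet. The half-open convention for the backward window matters here. Two consecutive windows share the state $t$, but only increments strictly after $t-2^\ell$ and up to $t$ are counted, so the shared state contributes no shared increment.

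With disjointness in hand, the double sum equals $\sum_{u' \in \bigcup_m J_m} (\Delta a_{u'} + \Delta b_{u'})$. Every summand is nonnegative and the union lies in $\{1,\dots,T\}$, so this is at most $\sum_{u'=1}^{T}(\Delta a_{u'}+\Delta b_{u'})$. By \eqref{eq:B} that full sum equals $\bcost - T$.

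I do not expect a real obstacle; the only delicate point is the bookkeeping of the shift between states and activity times. The windows are indexed by states, while the increments are indexed by $u+1$. I will check at both ends that $J_m$ starts at $(m-1)2^\ell+1$ and not at $(m-1)2^\ell$, because an off-by-one there would make consecutive windows overlap in one increment. I will also note that the hypothesis $\ell \ge 1$ plays no role in this argument. It is needed only because the later charging applies the lemma with $\ell = i+1$, and the argument holds verbatim for any $\ell \ge 0$.
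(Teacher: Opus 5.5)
Your proposal is correct and takes the same route as the paper. The paper notes that distinct multiples of $2^\ell$ differ by at least $2^\ell$, so the half-open increment ranges $[t-2^\ell,t)$ are disjoint and lie in $[0,T)$, and then bounds the sum by the full variation, which is $\bcost - T$ by \eqref{eq:B}. Your explicit index sets $J_m$ and your observation that $\ell \ge 0$ would suffice are a more detailed rendering of that same argument.
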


\begin{proof}
Distinct multiples of $2^\ell$ differ by at least $2^\ell$, so the half-open
increment ranges $[t-2^\ell, t)$ are disjoint and contained in $[0,T)$.
\end{proof}

Charging by trigger time rather than by written slice is what keeps the window
backward-looking. A chronological algorithm on an online trajectory cannot
justify a write by activity that has not happened yet, so a window extending
forward from the written slice would be unavailable; charging by the pair (trigger
time, level) makes the direction explicit, and the windows then tile instead of
merely overlapping boundedly.

\begin{lemma}[Total band radius]
\label{lem:radii}
$\sum_{t=1}^{T} 2^{\nu(t)} \le (L+1)\,T$.
\end{lemma}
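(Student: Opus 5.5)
The plan is to regroup the sum according to the value of the $2$-adic valuation. For each $k \in \{0,\dots,L\}$, the number of $t \in [1,T]$ with $\nu(t) = k$ is at most the number of multiples of $2^k$ in $[1,T]$, which is $\lfloor T/2^k\rfloor$. There is also a range check: if $\nu(t)=k$ then $2^k \le t \le T$, so $k \le \lfloor \log_2 T\rfloor = L$. Hence only levels $k = 0,\dots,L$ occur, and
\[
  \sum_{t=1}^{T} 2^{\nu(t)} \;=\; \sum_{k=0}^{L} 2^k\,\bigl|\{t \le T : \nu(t)=k\}\bigr|
  \;\le\; \sum_{k=0}^{L} 2^k \Bigl\lfloor \frac{T}{2^k}\Bigr\rfloor \;\le\; \sum_{k=0}^{L} T \;=\; (L+1)\,T .
\]

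The only points needing care are the two used above. The first is the bound $k \le L$, which follows from $2^{\nu(t)} \mid t$ and $t \ge 1$, giving $2^{\nu(t)} \le t \le T$. The second is the count $|\{t : \nu(t)=k\}| \le \lfloor T/2^k \rfloor$, which uses the crude inclusion of "valuation exactly $k$" in "divisible by $2^k$". The sharper count $\lfloor T/2^k\rfloor - \lfloor T/2^{k+1}\rfloor$ would give roughly $(L/2+1)T$, but it is not needed here.

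An alternative I would keep as a check is to write $2^{\nu(t)} = 1 + \sum_{i<\nu(t)} 2^i$ and count, for each $i$, the $t$ with $2^{i+1}\mid t$. This matches how the schedule fires level $i$ once every $2^{i+1}$ steps and gives the same bound. I expect no real obstacle: the lemma is a direct counting argument, and the only place to slip is the off-by-one in the range of $k$.
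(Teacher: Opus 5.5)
Your proof is correct and is essentially the paper's argument. The paper bounds $2^{\nu(t)}$ pointwise by $\sum_{j\le L}[\,2^j\mid t\,]\,2^j$ and then exchanges the order of summation, arriving at the same intermediate sum $\sum_{j\le L}2^j\lfloor T/2^j\rfloor\le(L+1)T$ that you reach by grouping on the exact valuation and overcounting each class by the multiples of $2^k$.
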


\begin{proof}
$2^{\nu(t)} \le \sum_{j\le L}[\,2^j \mid t\,]\,2^j$. Exchanging the order of
summation gives $\sum_{j\le L} 2^j\lfloor T/2^j\rfloor \le (L+1)T$.
\end{proof}

\begin{theorem}[Maintenance write count]
\label{thm:writes}
The cells materialized at intermediate times satisfy
$\displaystyle\sum_{t=1}^{T}\ \sum_{i<\nu(t)} \bigl|\sh{i}{t}\bigr| \;\le\;
12\,(L+1)\,\bcost$, with no hypothesis on the trajectory.
\end{theorem}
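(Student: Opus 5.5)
The plan is to bound each shell by the hull of its backward window, sum over trigger times level by level, and then close with Lemma~\ref{lem:tiling} and Lemma~\ref{lem:radii}.

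\textbf{Bounding one shell.} Fix a trigger time $t$ and a level $i<\nu(t)$, and set $\ell=i+1$, so that $2^\ell \mid t$ and $2^\ell \le t$. By Lemma~\ref{lem:shell-local}, $\sh{i}{t}\subseteq\hull{t-2^{\ell}}{2^{\ell}}{2^{\ell}}$. Lemma~\ref{lem:hull-card} then gives
\[
|\sh{i}{t}| \le 4 + 2V_\ell(t) + 4\cdot 2^{\ell},
\qquad
V_\ell(t)=\sum_{u\in[t-2^\ell,t)}(\Delta a_{u+1}+\Delta b_{u+1}).
\]

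\textbf{Summing, exchanging the order.} Write the double sum as $\sum_{\ell=1}^{L+1}\sum_{t:\,2^\ell\mid t}$; levels with $2^\ell>T$ have no triggers. I treat the three terms separately.

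\emph{The variation term.} For each fixed $\ell$, Lemma~\ref{lem:tiling} bounds $\sum_t V_\ell(t)$ by $\bcost-T$. There are at most $L$ levels, so this term contributes at most $2L(\bcost-T)$.

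\emph{The constant term.} The number of pairs $(t,i)$ is $\sum_t\nu(t)=\sum_{\ell\ge1}\lfloor T/2^\ell\rfloor\le T$. This term therefore contributes at most $4T$.

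\emph{The radius term.} Here $\sum_t\sum_{i<\nu(t)}2^{i+1}=\sum_t(2^{\nu(t)+1}-2)\le 2\sum_t 2^{\nu(t)}$, which Lemma~\ref{lem:radii} bounds by $2(L+1)T$. Multiplying by $4$ gives at most $8(L+1)T$.

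\textbf{Closing.} Adding the three contributions gives
\[
2L(\bcost-T)+4T+8(L+1)T \le 2(L+1)\bcost + 10(L+1)T.
\]
Since $T\le\bcost$, this is at most $12(L+1)\bcost$, which is the claimed bound.

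The bookkeeping is straightforward; the one delicate point is the constant. It only works because of how the radius term is handled. The shell at level $i$ is dilated by $2^{i+1}$, not $2^i$, which doubles the radius sum. One must still use the geometric-sum identity $\sum_{i<\nu}2^{i+1}<2\cdot2^{\nu}$ rather than bounding each level separately by $T$. If the constant $12$ turned out slightly off, I would recheck whether the $+4$ term should be charged against the $T$ inside $\bcost$, as done above.
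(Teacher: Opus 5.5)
Your proof is correct and follows the paper's argument: shell locality places each $\sh{i}{t}$ in the hull of its backward window, and Lemma~\ref{lem:hull-card} splits the bound into constant, variation and radius terms. Lemmas~\ref{lem:tiling} and~\ref{lem:radii} then close the sum, and your bookkeeping is slightly tighter than the paper's: $\sum_t\nu(t)\le T$ rather than $(L+1)T$, and $L$ rather than $L+1$ levels for the variation term. One aside in your closing commentary is backwards: bounding the radius term level by level is better, not worse, since $\sum_{i<L}2^{i+1}\lfloor T/2^{i+1}\rfloor\le LT$, so the geometric-sum route through Lemma~\ref{lem:radii} suffices but is not needed.
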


\begin{proof}
By Lemma~\ref{lem:shell-local} each shell lies in a hull, whose size
Lemma~\ref{lem:hull-card} bounds by
\[
  4 \;+\; 2\!\!\sum_{u \in (t-2^{i+1},\,t]}\!\!(\Delta a_u + \Delta b_u)
    \;+\; 8\cdot 2^{i}.
\]
Sum over $i < \nu(t)$ and $t \le T$. The constant terms give
$4\sum_t \nu(t) \le 4(L+1)T$. The variation terms give at most
$2(L+1)(\bcost - T)$: by Lemma~\ref{lem:tiling} each level contributes one copy,
and there are $L+1$ levels. The radius terms give at most $8(L+1)T$ by
Lemma~\ref{lem:radii}. The two terms proportional to $T$ contribute
$(4+8)(L+1)T$ and the variation term contributes $2(L+1)(\bcost-T)$; majorizing
both coefficients by $12$ and using $T + (\bcost - T) = \bcost$ gives
$12(L+1)\bcost$.
\end{proof}

Theorem~\ref{thm:writes} counts the cells materialized at intermediate times.
Two families sit outside it and are bounded separately: the oracle's cells, at
most $2\bcost$ by Lemma~\ref{lem:oracle}, and the output slice $I(T)$, at most $N + \bcost$ by Lemma~\ref{lem:block}. What
it costs to \emph{produce} these values is the superstep primitive, charged
next.

\subsection{Work and span}

\begin{lemma}[Block size]
\label{lem:block}
Every block the schedule evaluates has at most $N + \bcost$ cells.
\end{lemma}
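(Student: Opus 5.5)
Every block the schedule evaluates is a subset of some slice $I(t)$: a shell arm lies in $\cur{2^i}{t}\subseteq I(t)$ by Lemma~\ref{lem:nesting}, an oracle block lies in $I(t)$, and the output block is $I(T)$ itself. So the plan is to bound the width of every slice by $N+\bcost$.

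\emph{Step 1: slice widths.} For every $t\le T$, telescope the endpoint increments from time $0$:
\[
  \wid(I(t)) = b_t-a_t \le (b_0-a_0)+\sum_{u=1}^{t}\bigl(\Delta a_u+\Delta b_u\bigr) \le N+(\bcost-T),
\]
using $|b_t-b_0|\le\sum_{u\le t}\Delta b_u$ and likewise for $a$. This gives the bound, with room to spare.

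\emph{Step 2: blocks in Algorithm~\ref{alg:solve}.} In the recursive view, a block $X$ passed to $\textsc{Solve}(s,f,X)$ is either $I(T)$ or a set of the form $Y=(X\oplus[\cdot])\cap I(m)$. Either way it lies in a single slice, and Step~1 applies.

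\emph{Step 3: the input a superstep reads.} If ``block'' is meant to include the radius-$k$ neighbourhood that a superstep reads, that input still lies in the slice $I(t-2^i)$ it is solved from. By Lemma~\ref{lem:cone}, the dependency cone of the cursor stays inside $I(t-2^i)$. So the input also has at most $N+\bcost$ cells. The kernel-length term in Theorem~\ref{thm:fft} is carried separately and is not part of the block.

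\emph{Expected obstacle.} The only delicate point is pinning down what counts as a block. Steps~1 and~2 are immediate. The care lies in Step~3: the superstep's input must be the slice-restricted neighbourhood, not the unrestricted dilation. I would state the lemma for both the written cells and the read cells, each contained in one slice.
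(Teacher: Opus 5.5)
Your proposal is correct and follows the paper's proof: you place each block inside a single slice $I(t)$ (for shells, via $\cur{2^i}{t}\subseteq I(t)$ from Lemma~\ref{lem:nesting}) and bound $\wid(I(t))$ by telescoping the endpoint increments, slightly more sharply than the paper, as $N+\bcost-T$. Your Steps~2 and~3, on the recursive blocks of Algorithm~\ref{alg:solve} and on the slice-restricted input a superstep reads (via Lemma~\ref{lem:cone}), go beyond what the paper's proof covers but are sound.
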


\begin{proof}
A shell lies in $\cur{2^i}{t} \subseteq I(t)$ by Lemma~\ref{lem:nesting}, and
$\wid(I(t)) \le \wid(I(0)) + \sum_{u\le t}(\Delta a_u + \Delta b_u) \le N +
\bcost$ by telescoping.
\end{proof}

\begin{theorem}[Work]
\label{thm:work}
The schedule's total work is
$\Oh{\bigl(\bcost + N\bigr)\,(L+1)\,\log(N{+}\bcost)}$.
\end{theorem}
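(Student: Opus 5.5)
The plan is to charge each superstep call to the primitive of Theorem~\ref{thm:fft} and sum. A call has two kinds of cost. One is proportional to its block size $v$. The other is proportional to its kernel length $k = 2^i$, and each is multiplied by a logarithm. So we bound three sums separately: the block terms, the kernel terms, and the oracle and output costs.

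Fix a trigger time $t$ and a level $i < \nu(t)$. The shell $\sh{i}{t}$ splits into its two arms, and each arm is one call with $v \le |\sh{i}{t}|$ and $k = 2^i$. By Theorem~\ref{thm:fft} the pair of calls costs
\[
  \Oh{\bigl(|\sh{i}{t}| + 2^i + 1\bigr)\log(2 + |\sh{i}{t}| + 2^i)}.
\]
We first bound the logarithm uniformly. Lemma~\ref{lem:block} gives $|\sh{i}{t}| \le N + \bcost$. Also $2^i \le T \le \bcost$. Hence the argument of the logarithm is $\Oh{N+\bcost}$, and every call's factor is $\Oh{\log(N+\bcost)}$. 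This leaves the linear terms to be summed over all pairs $(t,i)$ with $i<\nu(t)$.

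There are three linear terms.
\begin{itemize}
\item The shell sizes sum to at most $12(L+1)\bcost$ by Theorem~\ref{thm:writes}.
\item The kernel terms give $\sum_t \sum_{i<\nu(t)} 2^i \le \sum_t 2^{\nu(t)} \le (L+1)T$ by Lemma~\ref{lem:radii}. This is at most $(L+1)\bcost$.
\item The $+1$ per call sums to $2\sum_t \nu(t) \le 2(L+1)T$.
\end{itemize}
Empty arms need no call, or cost $\Oh{1}$ and are absorbed in the last term.

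We then add the costs outside the shells, each at most $\Oh{N+\bcost}$:
\begin{itemize}
\item the oracle calls, at most $2\bcost$ by Lemma~\ref{lem:oracle};
\item reading the coefficient schedule, $\Oh{T}$;
\item building the time-product tree, $\Oh{T}$ nodes at $\Oh{1}$ each for supports; composing operators along the same tree is charged like FFT convolutions of kernel length $2^i$ at each of the $T/2^i$ nodes per level, so $\Oh{T\log T \cdot \log T}$, which is within the bound;
\item the final slice $I(T)$, written by the top-level call with at most $N+\bcost$ cells (Lemma~\ref{lem:block}) and kernel $\le T$.
\end{itemize}
Multiplying the linear total $\Oh{(L+1)(\bcost+N)}$ by the uniform $\Oh{\log(N+\bcost)}$ gives the theorem.

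The main obstacle is the final-slice term and the deep cursor: cells of $\cur{2^{\nu(t)}}{t}$ are deferred rather than written at time $t$. I would check that every deferred cell is eventually produced by one of the calls counted above, namely a shell at a later trigger time or the final call onto $I(T)$. Then no superstep falls outside the accounting. I would handle this through Proposition~\ref{prop:cover} applied at $t = T$ with $n = L+1$ levels, plus one extra call for the residual deep cursor from time $0$. That call has $v \le N+\bcost$ and $k \le T$, costing $\Oh{(N+\bcost)\log(N+\bcost)}$.

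A secondary point is whether arms may need input neighbourhoods larger than the arm itself. The input length is $v + 2k$, so it is already covered by the $k$ term in Theorem~\ref{thm:fft}.
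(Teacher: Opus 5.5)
Your proposal is correct and follows the paper's proof essentially line for line. It bounds every logarithm by $\Oh{\log(N+\bcost)}$ using Lemma~\ref{lem:block} and $2^i\le T\le\bcost$, sums the block terms by Theorem~\ref{thm:writes}, the kernel terms by Lemma~\ref{lem:radii} and the per-call constants by $\sum_t\nu(t)\le(L+1)T$, and then adds the oracle (Lemma~\ref{lem:oracle}) and one output superstep on at most $N+\bcost$ cells. Your extra $\Oh{T\log^2 T}$ charge for building the time-product tree is left implicit in the paper and fits within the bound, while your deferred-cell paragraph concerns completeness rather than cost and is not needed here (if you keep it, apply Proposition~\ref{prop:cover} at $t=T$ with $n=\nu(T)$ rather than $L+1$, since $\cur{2^{L+1}}{T}$ would look back past time $0$).
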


\begin{proof}
By Theorem~\ref{thm:fft} a level-$i$ call on a block of $v$ cells costs
$\Oh{(v + 2^{i} + 1)\log(2+v+2^{i})}$. Lemma~\ref{lem:block} bounds $v$ by
$N+\bcost$, and $2^{i} < 2^{\nu(t)} \le t \le T \le \bcost$, so every logarithm is
$\Oh{\log(N+\bcost)}$. Summing over the intermediate calls, the $v$ terms give the
write count of Theorem~\ref{thm:writes}; the kernel terms give
$\sum_t\sum_{i<\nu(t)} 2^{i} \le \sum_t 2^{\nu(t)} \le (L+1)T$ by
Lemma~\ref{lem:radii}; and the $+1$ terms give the number of calls,
$\sum_t \nu(t) \le (L+1)T$. Both are at most $(L+1)\bcost$ since $T\le\bcost$. The
output slice adds one superstep on at most $N+\bcost$ cells, and the oracle
contributes $\Oh{\bcost}$ by Lemma~\ref{lem:oracle}.
\end{proof}

\begin{theorem}[Span]
\label{thm:span}
The schedule's total span is $\Oh{\log(N{+}\bcost)\cdot(L+1)\cdot T}$.
\end{theorem}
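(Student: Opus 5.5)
The span bound comes from treating the schedule as $T$ phases executed in time order. Phase $t$ is the work Algorithm~\ref{alg:step} does at time $t$. The overall span is then at most the sum, over $t$, of the span of phase $t$. The phases really do serialize: the region $I(t)$, and hence the cursors $\cur{2^i}{t}$ and the shells, is not known until the oracle reports $a_t,b_t$, which happens after time $t-1$. So the plan is to bound the span of one phase by $\Oh{(L+1)\log(N+\bcost)}$ and multiply by $T$.

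Within phase $t$ the work has three parts.

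\textbf{Endpoints and oracle cells.} We query the oracle for $a_t,b_t$ and for the cells of $I(t)\setminus\cur{1}{t}$. These queries are independent, so they can be forked in parallel, at most $\Oh{\log(N+\bcost)}$ span for the binary fork tree. The number of oracle cells is at most $2+\Delta a_t+\Delta b_t\le N+\bcost+2$ by Lemma~\ref{lem:oracle}.

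\textbf{Cursor endpoints.} The levels need $\curlo{2^{i}}{t}$ and $\curhi{2^{i}}{t}$ for $i\le\nu(t)$. I would not compute the maxima of Definition~\ref{def:cursor} from scratch. Instead, use the recurrence
\[
  \curlo{w+1}{f}=\max\bigl(a_f,\ \curlo{w}{f-1}+1\bigr),
\]
which follows by splitting off the $j=0$ term. This lets the algorithm maintain all needed cursor endpoints incrementally, at $\Oh{1}$ per level per step, so $\Oh{L+1}$ sequential span. Alternatively, a parallel max-reduction over $2^i\le T$ terms has span $\Oh{\log T}$ per level. Either way the cost is $\Oh{(L+1)\log T}$ or less.

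\textbf{Supersteps.} We perform the supersteps for levels $i<\nu(t)$, two arms each. The composed operators are already nodes of the time-product tree, and building each node costs one convolution of span $\Oh{\log T}$, charged to the step that completes it. The input of a level-$i$ superstep is the slice at time $t-2^i$, which was materialized in an earlier phase. The shells $\sh{i}{t}$ are disjoint (Proposition~\ref{prop:cover}), so all levels can run in parallel. Even run sequentially, they cost at most $\nu(t)\le L+1$ calls, each of span $\Oh{\log(2+v+2^i)}$ by Theorem~\ref{thm:fft}. Here $v\le N+\bcost$ by Lemma~\ref{lem:block} and $2^i\le T\le\bcost$, so each call has span $\Oh{\log(N+\bcost)}$. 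The final output superstep adds one further $\Oh{\log(N+\bcost)}$.

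Summing the three parts gives per-phase span $\Oh{(L+1)\log(N+\bcost)}$, and over $T$ phases the total is $\Oh{\log(N+\bcost)(L+1)T}$.

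The main obstacle is justifying the sequential composition. One must argue that every dependency of phase $t$ lies in phases $\le t$. The part that needs care is the input slice: the level-$i$ superstep needs the time-$(t-2^i)$ values on the radius-$2^i$ neighbourhood of the shell, and these must have been materialized by then. I would invoke Lemma~\ref{lem:cone} to place that neighbourhood inside $I(t-2^i)$. Its values are then available because at time $t-2^i$ they were oracle cells, shell cells, or a deep cursor that a longer superstep landing at $t-2^i$ produced. Since the paper does not formalize the equivalence of Algorithms~\ref{alg:solve} and~\ref{alg:step}, I would state this availability as the scheduling assumption under which the span is counted, rather than attempt a full proof.
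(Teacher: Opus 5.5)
This is the paper's proof. Each call has span $\Oh{\log(N+\bcost)}$ by Theorem~\ref{thm:fft}, using $v\le N+\bcost$ from Lemma~\ref{lem:block} and $2^i\le T\le\bcost$; time $t$ makes $\nu(t)\le L+1$ calls; and the timesteps serialize. The paper sums $\sum_t\nu(t)\le(L+1)T$ directly rather than bounding each phase, and it leaves implicit the oracle, cursor and tree-node costs that you account for.

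One correction to your last paragraph: because $\nu(t-2^i)=i$, Algorithm~\ref{alg:step} defers the deep cursor $\cur{2^i}{t-2^i}$ to a superstep landing \emph{later}. Nothing landing at $t-2^i$ produces those cells, yet the radius-$2^i$ neighbourhood of $\sh{i}{t}$ can meet them. Input availability is therefore an unproved assumption, exactly as you end up treating it; the paper's proof is silent on it too.
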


\begin{proof}
Each call contributes span $\Oh{\log(N{+}\bcost)}$ by Theorem~\ref{thm:fft},
using the same bounds on $v$ and $2^{i}$ as above, and time $t$ makes $\nu(t)$
calls; sum using $\sum_t \nu(t) \le (L+1)T$.
\end{proof}

The span is linear in $T$. That is not an artefact of the analysis: the region at
time $t$ is not determined until time $t-1$ has been computed, so the timesteps
serialize. We do not know a schedule that does better, and we do not prove a
matching lower bound. What the bound does not contain is $\bcost$ --- an arbitrarily
large jump costs work, not depth. When the trajectory is known in advance the
serialization disappears:

\begin{corollary}[Depth of a balanced schedule]
\label{cor:offline}
A schedule organized as a balanced binary tree over $2^{h}$ supersteps, each of
depth $\delta$, has depth $(h+1)\,\delta$.
\end{corollary}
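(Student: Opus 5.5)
The claim is a statement about the depth of a computation DAG shaped as a balanced binary tree. I would prove it by induction on $h$, with the depth of a schedule defined as the longest weighted path in its dependency DAG. The $2^h$ supersteps are the leaves, and each internal node is a combine step: it composes operators or hands values from the child interval $[s,m)$ to the child interval $[m,f)$. I would charge each combine step at most $\delta$; for the offline linear case this is the convolution-depth bound of Theorem~\ref{thm:fft}. The conventions matter here. Adding the contributions of the two children along a root-to-leaf path counts $h+1$ nodes, each of cost at most $\delta$, so the claim is exactly ``longest path in a balanced tree of height $h$ has $h+1$ nodes''.

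\textbf{Base case} ($h=0$). The schedule is a single superstep of depth $\delta$, which equals $(0+1)\delta$.

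\textbf{Inductive step.} A tree over $2^{h+1}$ supersteps consists of a root combine node over two subtrees, each over $2^h$ supersteps. When the trajectory is known in advance, the two subtrees have no data dependence on each other until the combine. The cursors of Definition~\ref{def:cursor} are computable from the known endpoints, so the inputs each half needs are determined without running the other half. The two subtrees can therefore be forked in parallel. In the binary-forking model the fork adds $O(1)$, which is absorbed into $\delta$. So the depth is at most
\[
\max(\text{depth of left subtree},\ \text{depth of right subtree}) + \delta \le (h+1)\delta + \delta = (h+2)\delta .
\]

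The main obstacle is justifying that the two halves really are independent. In Algorithm~\ref{alg:solve}, the call $\textsc{Solve}(m,f,X)$ consumes the time-$m$ values produced by $\textsc{Solve}(s,m,Y)$. Taken literally, that sequential dependency would give the sum of the two depths, not their maximum. I would resolve it with linearity. When the update is linear and the trajectory is offline, the time-$f$ output is a fixed linear map applied to the time-$s$ data and the oracle data. The right half can compute its composed operator, via the time-product tree, in parallel with the left half computing its values. The root node then performs one application of that composed operator, of depth $\delta$. The tree nodes are aligned dyadic intervals, so this is exactly the node structure already described for composed operators.

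If this reformulation proves awkward, I would fall back to stating the corollary abstractly. The claim then applies to any DAG in which each internal node has depth $\le\delta$ and waits only on its two children. In that form it is the purely combinatorial induction above, and the application to the offline case becomes a separate remark.
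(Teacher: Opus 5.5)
Your argument is correct and matches the paper's proof: induct on the tree, where a node's depth is the maximum of its children's depths plus $\delta$, so height $h$ gives $(h+1)\delta$ (charging every node exactly $\delta$, as the paper does, turns your inequalities into the stated equality). The discussion of whether the two halves of Algorithm~\ref{alg:solve} are independent is not needed, since ``organized as a balanced binary tree'' already means each node waits only on its children, and the paper, like your fallback, defers the offline linear application to a remark; your linearity sketch would also need more than Theorem~\ref{thm:fft} there, because near a moving boundary the right half's composed map is not a plain convolution.
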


\begin{proof}
Every root-to-leaf path costs the same, so the depth of a node is the maximum of
its children's depths plus $\delta$; induction on the tree gives
$(\mathrm{height}+1)\delta$, and the balanced tree over $2^{h}$ leaves has height
$h$.
\end{proof}

When the trajectory is supplied in advance and the update is linear, operator
composition is associative over time, so the supersteps may be arranged as such a
tree rather than as a chronological loop. With $h = \lceil\log_2 T\rceil$ and
$\delta = \Oh{\log(N{+}\bcost)}$ this gives span $\Oh{\log T\log(N{+}\bcost)}$ at
the same work.

With $L = \Oh{\log T}$ the two theorems read
\[
  \text{work } \Oh{(\bcost+N)\log T\log(N+\bcost)},
  \qquad
  \text{span } \Oh{T\log T\log(N+\bcost)},
\]
as in Table~\ref{tab:compare}.

\section{A single interval is necessary}
\label{sec:necessity}

Lemma~\ref{lem:chaining} gives two intervals for one region and $2p$ for $p$ of
them, so the radius term of Lemma~\ref{lem:hull-card} becomes $4pR$ and
the same argument would give $\Oh{p\,(L+1)\bcost}$ for
Theorem~\ref{thm:writes}. That degradation is attained.

\begin{proposition}[The radius is paid once per component]
\label{prop:pieces}
Let $R \ge 0$.
\begin{enumerate}[label=(\alph*)]
\item If $p$ centers are pairwise at least $2R+1$ apart, their radius-$R$
  dilation has exactly $p\,(2R+1)$ cells.
\item If every center lies in $[u,v]$, the dilation has at most $v-u+2R+1$ cells,
  whatever $p$ is.
\item With $p = 2^{n}$ such components of radius $R = 2^{n}$, fired at each of
  $2^{n}$ trigger times, the dilation total is at least $2\cdot 8^{n}$; for
  $T = 4^{n}$ this is $2\,T^{3/2}$.
\end{enumerate}
\end{proposition}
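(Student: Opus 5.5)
Parts (a) and (b) are elementary counting facts about unions of integer balls, and (c) follows from (a) once the parameters are substituted.

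\textbf{Part (a).} The radius-$R$ dilation of a center $c$ is the integer interval $[c-R,\,c+R]$, which has $2R+1$ cells. If two centers $c<c'$ satisfy $c'-c \ge 2R+1$, then $c+R < c'-R$, so their balls are disjoint. Pairwise separation therefore makes all $p$ balls pairwise disjoint. The cardinality of the union is the sum of the cardinalities, which is $p(2R+1)$.

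\textbf{Part (b).} Every ball $[c-R,\,c+R]$ with $c\in[u,v]$ lies inside $[u-R,\,v+R]$. So the union lies in that single interval, which has $v-u+2R+1$ cells, independently of $p$. This is the one-region situation of Lemma~\ref{lem:hull-card}: a single hull pays the radius once, however many centers it contains.

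\textbf{Part (c).} Take $p=2^n$ centers spaced $2R+1 = 2^{n+1}+1$ apart, with $R=2^n$. By (a), each trigger time contributes exactly $2^n(2^{n+1}+1)$ cells. Summing over $2^n$ trigger times gives
\[
4^n(2^{n+1}+1) \;\ge\; 2\cdot 8^n .
\]
With $T=4^n$ we have $8^n = T^{3/2}$, so the total is at least $2T^{3/2}$.

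\textbf{The main obstacle.} The arithmetic in (c) is routine. The only point needing care is the statement's phrase ``dilation total''. I would read it as the sum, over the $2^n$ trigger times, of the union's cardinality at each time, which is exactly what the radius terms in the proof of Theorem~\ref{thm:writes} sum. Under that reading no overlap between different times has to be excluded: the centers may be the same at every trigger, and the bound holds as computed.

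I would close with a remark contrasting this with (b). Packing the same centers into one interval of length $O(T)$ would give only $O(T+R)$ per trigger, whereas (c) gives $\Theta(T^{3/2})$ in total. The intended full instance, with work $\Theta(T^{3/2})$ while $\bcost+N=\Theta(T)$, needs a trajectory realizing these components. That construction is beyond the stated claims, so I would not prove it here.
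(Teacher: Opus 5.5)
Your proposal is correct and follows the paper's proof essentially verbatim. You use disjoint balls of $2R+1$ cells for (a) and containment in $[u-R,\,v+R]$ for (b); for (c) you take the product $2^{n}\cdot 2^{n}\cdot(2^{n+1}+1)=2\cdot 8^{n}+4^{n}$, and your reading of ``dilation total'' as a sum over trigger times is exactly what the paper multiplies out.
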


\begin{proof}
(a) The balls around separated centers are pairwise disjoint and each has $2R+1$
cells. (b) The dilation is contained in $[u-R,\,v+R]$. (c) Multiply: $2^{n}$
trigger times, $2^{n}$ components, $2\cdot 2^{n}+1$ cells each, and
$2^{n}\cdot 2^{n}\cdot(2\cdot 2^{n}+1) = 2\cdot 8^{n} + 4^{n}$.
\end{proof}

Parts (a) and (b) are the whole of it: a single interval pays the radius once,
$p$ separated components pay it $p$ times, and Lemma~\ref{lem:hull-card}'s
coefficient $4R$ becomes $4pR$. Part (c) turns that factor into a $T^{3/2}$
total, against a boundary cost that such a family keeps at $\Theta(T)$ --- the
single-interval bound of Theorem~\ref{thm:writes} would give $\Oh{T\log T}$ for
the same data, so the gap is $\Theta(\sqrt{T})$.

No polylogarithmic factor absorbs $\Theta(\sqrt T)$. Note that such a family can
keep its \emph{mean} activity at $\Oh{1}$, so bounding average activity does not
rescue the bound. What fails is that nothing ties the number of components to the
$2$-adic valuation, and a single interval forecloses that by having boundedly many
components in every window rather than boundedly many cells.

\section{Discussion}
\label{sec:discussion}

\paragraph{Higher dimensions.}
Lemma~\ref{lem:chaining} replaces a union of swept bands by two intervals, which
is available only in one dimension. The corresponding statement for $d \ge 2$
would convert a $(d-1)$-dimensional surface into a $d$-dimensional volume, and the
component count that makes Lemma~\ref{lem:hull-card} work has no analogue there.
We do not know whether the obstruction is avoidable.

\paragraph{Span.}
Corollary~\ref{cor:offline} settles the offline case. In the online case we
believe the right bound has the shape $\tilde{O}(m)$, where $m$ is the number of
timesteps at which the boundary actually moves, since the dependency chain
through the boundary decisions has depth $m$; obtaining it appears to require
speculating over blocks and verifying, which is sound only under a locality
hypothesis on the criterion that moves the boundary. We prove neither that upper
bound nor a matching lower bound.

\paragraph{When the bound is strong.}
The work is $\Oh{(\bcost+N)\log T\log(N{+}\bcost)}$, so it is near-linear exactly
when the boundary's total travel is comparable to the horizon. A boundary that
moves $\Oh{1}$ cells per step on average has $\bcost = \Oh{T}$ and total work
$\Oh{(N+T)\log T\log(N{+}T)}$ --- whatever its largest single step, since one
arbitrarily long jump is absorbed into the sum. The bound is weak when the total
travel is superlinear in $T$, and Remark~\ref{rem:tightness} shows that in that
regime it can also be far from optimal.

\paragraph{Acknowledgements.}
This work was supported in part by NSF grant CCF-2318633.

\begingroup
\small
\bibliographystyle{abbrv}
\bibliography{references}

\begin{thebibliography}{10}

\bibitem{adalsteinsson1995fast}
D.~Adalsteinsson and J.~A. Sethian.
\newblock A fast level set method for propagating interfaces.
\newblock {\em Journal of Computational Physics}, 118(2):269--277, 1995.

\bibitem{ahmad2024fast}
Z.~Ahmad, R.~Browne, R.~Chowdhury, R.~Das, Y.~Huang, and Y.~Zhu.
\newblock Fast {American} option pricing using nonlinear stencils.
\newblock In {\em Proceedings of the 29th ACM SIGPLAN Annual Symposium on
  Principles and Practice of Parallel Programming}, pages 316--332, 2024.

\bibitem{ahmad2021binary}
Z.~Ahmad, R.~Chowdhury, R.~Das, P.~Ganapathi, A.~Gregory, and M.~M. Javanmard.
\newblock Low-span parallel algorithms for the {Binary-Forking} model.
\newblock In {\em Proceedings of the 33rd ACM Symposium on Parallelism in
  Algorithms and Architectures}, pages 22--34, 2021.

\bibitem{ahmad2021fast}
Z.~Ahmad, R.~Chowdhury, R.~Das, P.~Ganapathi, A.~Gregory, and Y.~Zhu.
\newblock Fast stencil computations using {Fast Fourier Transforms}.
\newblock In {\em Proceedings of the 33rd ACM Symposium on Parallelism in
  Algorithms and Architectures}, pages 8--21, 2021.

\bibitem{ahmad2023fast}
Z.~Ahmad, R.~Chowdhury, R.~Das, P.~Ganapathi, A.~Gregory, and Y.~Zhu.
\newblock A fast algorithm for aperiodic linear stencil computation using {Fast
  Fourier Transforms}.
\newblock {\em ACM Transactions on Parallel Computing}, 10(4):1--34, 2023.

\bibitem{ahmad2022fourst}
Z.~Ahmad, M.~M. Javanmard, G.~Croisdale, A.~Gregory, P.~Ganapathi, L.-N.
  Pouchet, and R.~Chowdhury.
\newblock Fourst: A code generator for {FFT}-based fast stencil computations.
\newblock In {\em 2022 IEEE International Symposium on Performance Analysis of
  Systems and Software (ISPASS)}, pages 99--108. IEEE, 2022.

\bibitem{companion2025}
R.~Bentley, R.~Chowdhury, A.~Gregory, and M.~Santomauro.
\newblock Applying {Fast Fourier Transforms} to accelerate spatially and
  temporally inhomogeneous stencil computations.
\newblock In {\em Proceedings of the 37th ACM Symposium on Parallelism in
  Algorithms and Architectures (SPAA)}, pages 17--33, 2025.

\bibitem{blelloch2019optimal}
G.~E. Blelloch, J.~T. Fineman, Y.~Gu, and Y.~Sun.
\newblock Optimal parallel algorithms in the binary-forking model.
\newblock {\em arXiv preprint arXiv:1903.04650}, 2019.

\bibitem{bondhugula2017}
U.~Bondhugula, V.~Bandishti, and I.~Pananilath.
\newblock Diamond tiling: Tiling techniques to maximize parallelism for stencil
  computations.
\newblock {\em IEEE Transactions on Parallel and Distributed Systems},
  28(5):1285--1298, 2017.

\bibitem{cooley1965}
J.~W. Cooley and J.~W. Tukey.
\newblock An algorithm for the machine calculation of complex {Fourier} series.
\newblock {\em Mathematics of Computation}, 19(90):297--301, 1965.

\bibitem{frigo2005cache}
M.~Frigo and V.~Strumpen.
\newblock Cache oblivious stencil computations.
\newblock In {\em Proceedings of the 19th International Conference on
  Supercomputing}, pages 361--366, 2005.

\bibitem{osher1988fronts}
S.~Osher and J.~A. Sethian.
\newblock Fronts propagating with curvature-dependent speed: Algorithms based
  on {Hamilton--Jacobi} formulations.
\newblock {\em Journal of Computational Physics}, 79(1):12--49, 1988.

\bibitem{tang2011pochoir}
Y.~Tang, R.~A. Chowdhury, B.~C. Kuszmaul, C.-K. Luk, and C.~E. Leiserson.
\newblock The {Pochoir} stencil compiler.
\newblock In {\em Proceedings of the 23rd ACM Symposium on Parallelism in
  Algorithms and Architectures}, pages 117--128, 2011.

\bibitem{wonnacott2002}
D.~Wonnacott.
\newblock Achieving scalable locality with time skewing.
\newblock {\em International Journal of Parallel Programming}, 30(3):181--221,
  2002.

\end{thebibliography}
\endgroup

\end{document}